\newtheorem{theorem}{Theorem}
\newtheorem{example}{Example}
\newtheorem{definition}{Definition}
\newtheorem{lemma}{Lemma}
\title{An Algorithmic Introduction to Savings Circles}
\author {
    Rediet Abebe,\textsuperscript{\rm 1}
    Adam Eck,\textsuperscript{\rm 2}
    Christian Ikeokwu, \textsuperscript{\rm 1}
    Samuel Taggart \textsuperscript{\rm 2}
}
\def\opt{\ensuremath{\textsc{OPT}}}
\def\alg{\ensuremath{\textsc{EQ}}}
\renewcommand{\labelenumi}{\bf \alph{enumi}.}
\newtheorem{corollary}{Corollary}
\newcommand{\vvec}{{\mathbf v}}
\newcommand{\vecv}{{\mathbf v}}
\newcommand{\xvec}{{\mathbf x}}
\newcommand{\vecx}{{\mathbf x}}
\newcommand{\pvec}{{\mathbf p}}
\newcommand{\vecp}{{\mathbf p}}
\DeclareMathOperator*{\argmax}{argmax}
\begin{document}

\maketitle

\begin{abstract}

Rotating savings and credit associations (roscas) are informal financial organizations common in settings where communities have reduced access to formal financial institutions. In a rosca, a fixed group of participants regularly contribute sums of money to a pot. This pot is then allocated periodically using lottery, aftermarket, or auction mechanisms. Roscas are empirically well-studied in economics. They are, however, challenging to study theoretically due to their dynamic nature. Typical economic analyses of roscas stop at coarse ordinal welfare comparisons to other credit allocation mechanisms, leaving much of roscas' ubiquity unexplained. In this work, we take an algorithmic perspective on the study of roscas. Building on techniques from the price of anarchy literature, we present worst-case welfare approximation guarantees. We further experimentally compare the welfare of outcomes as key features of the environment vary. These cardinal welfare analyses further rationalize the prevalence of roscas. We conclude by discussing several other promising avenues.
\end{abstract}

\section{Introduction}\label{sec:intro}

%PARA 1: roscas are great!
Rotating saving and credit associations (roscas) are financial institutions common in low- and middle-income nations, as well as immigrant and refugee populations around the world. In a rosca, a group of individuals meet regularly for a defined period of time. At each meeting, members contribute a sum of money into a pot, which is then allocated via some mechanism, such as a lottery or an auction. Recipients often use this money to purchase durable goods (e.g., farming equipment, appliances, and vehicles), to buffer shocks (e.g. an unexpected medical expense), or to pay off loans. Roscas often exist outside of legal frameworks and do not typically have a central authority to resolve disputes or enforce compliance. Instead, they provide a decentralized mechanism for peer-to-peer lending, where members who receive the pot earlier borrow from those who receive it later. They also create a structure for mutual support and community empowerment.

%PARA 2: Roscas are prevalent
Roscas are used in over 85 countries and are especially prevalent in contexts where communities have reduced access to formal financial institutions \cite{aredo2004rotating,bouman1995rosca,klonner2002understanding,la2002inequality,raccanello2009health}. Roscas account for about one-half of Cameroon's national savings. Likewise, over one in six households in Ethiopia's highlands participate in \textit{ekub}, the region's variant of roscas \citet{bouman1995rosca}. Due to their ability to provide quick, targeted support within communities, roscas and other mutual aid organizations often play an instrumental role when communities experience shocks and disasters \cite{chevee2021mutual,mesch2020covid,travlou2020kropotkin}.

Roscas are well-studied in the economics literature, with economic theory on the subject pioneered by \citet{besley1993economics,kovsted1999rotating,kuo1993loans} (see Appendix \ref{app:related} for further related works). This line of work seeks to explain how roscas act as insurance, savings, and lending among members. While such studies have deepened our understanding of roscas, they are typically constrained in two main ways. First, the standard economic approach solves exactly for equilibria, which can be especially difficult due to the dynamic nature of roscas. Second, much of the existing theory focuses on coarse-grained comparisons between the welfare of roscas and other mechanisms for allocating credit. In part due to these coarse comparisons, this work often concludes that roscas allocate credit suboptimally, leaving open the question of why roscas are prevalent in practice. 
%Three key challenges drive the limitations of these studies: first, roscas are typically dynamic, with agents making several decisions as the rosca unfolds. Second, roscas are heterogeneous from region to region and community to community. Third, rosca outcomes are often determined through auctions, which then require an understanding of game-theoretic equilibrium.
%First, roscas are dynamic, with participants making decisions in an online fashion and conditioning their future decisions on past outcomes. Second, since participants are often uncertain about each others' financial need, a theory of roscas must accommodate incomplete information. And, finally, participants’ utilities in receiving a pot is often nonlinear due to the sums of money at play within roscas.
%The standard economics approach of solving exactly for equilibria depends intricately on the details of the rosca and grows exponentially complex as agents make decisions dynamically.

In this work, we initiate an algorithmic study of roscas. Viewing roscas through the lens of approximation and using techniques from the price of anarchy literature, we study the welfare properties of rosca outcomes without directly solving for them. % and obtain cardinal welfare guarantees.
We specifically quantify the allocative efficacy of roscas: \emph{how well do roscas coordinate saving and lending among participants with heterogeneous investment opportunities}? We show roscas enable a group’s lending and borrowing in a way that approximately maximizes the groups’ total utility. We do so under a wide range of assumptions on both participants’ values for investment and the mechanisms used for allocating the pots. This robustness may provide one explanation for their prevalence.
%to add later possibly: what do other fields and esp econ tell us about this point around allocative efficacy. probs no needed but may be helpful to say something concrete here

Our work builds on the saving and lending formulation of \citet{besley1994rotating}. We assume each participant seeks to purchase an investment, such as a durable good, but can only do so upon winning the rosca's pot. We analyze the welfare properties of typical pot allocation mechanisms, such as ``swap roscas,'' where the participants are given an initial (e.g.\ random) allocation and then swap positions in an after-market through bilateral trade agreements. We also study the price of anarchy in auction-based roscas where, during each meeting, participants bid to decide a winner among those who have not yet received a pot. During each round, participants must weigh the value of investing earlier against the utility loss from spending to win that round.  

Our technical contributions are as follows: For swap roscas, we prove that all outcomes guarantee at most a factor $2$ loss. For auction-based roscas, we give full-information price of anarchy results: we study second-price sequential roscas and give a price of anarchy of $3$ under a standard no-overbidding assumption. For first-price sequential roscas, we provide a ratio of $(2e+1)e/(e-1)$. %\approx 10.18
Our work provides new applications of and extensions to the {\em smoothness} framework of \citet{syrgkanis2013composable}. However, due to the round-robin (i.e you can only win once) property of rosca allocations and the fact that all payments are redistributed to members, standard smoothness arguments do not immediately yield bounds in our setting. Via a new sequential composition argument, we show that roscas based on smooth mechanisms are themselves smooth, and we go beyond smoothness to bound the distortionary impact of redistributed payments on welfare. Our above results hold under the well-studied assumption of {\em quasilinear} utility for money. We extend most of our theoretical results to nonlinear utility functions, and also use simulations to consider the impact of nonlinear utilities in several natural families of swap roscas.

Overall, this work aims to provide greater exposure for mutual aid organizations more generally and roscas to the algorithms community. In doing so, we present a case study showing how algorithmic game theory can provide a useful perspective for further understanding fundamental questions related to these financial organizations. Given their prevalence and efficacy, insights into roscas can help inform the design of other safety net programs, especially for communities that already commonly use roscas. As new technologies are introduced in low-access contexts, also, the need to understand existing, prevalent financial organizations is even more pressing. We close the paper with discussion of promising research directions.

% (1) The number of individuals impacted by roscas is substantial. (2) Roscas are a programmatic cornerstone for large development NGOs such as Care International \cite{care} (3) To understand the impact of new financial technologies in low-access contexts, it is necessary to understand those currently in use.

%for future: \stnote{Things that make auction roscas unique from a theory perspective: rebates and irrevocable allocation (once you win, you're kicked out). Mentioning this might be useful for the robots who don't care about the story.}

%for future: \aenote{A table summarizing the theoretical results might be helpful here.  The first column could be them allocation mechanism (swap, up-front rosca, sequential auction without overbidding, FPA sequential with overbidding) and the second column would be PoA bound (and maybe even the corresponding Theorem in parentheses).  I reviewed a GT paper for AAAI last year that did something like that, and it was really helpful.}

\section{Model and Preliminaries}

A rosca consists of $n$ participants and takes place over $n$ discrete and fixed time periods, or {\em rounds}. During each round three things occur: (1) each participant contributes an amount $p_0$ into the rosca common pot, (2) a winner for the pot is decided among those who have not yet won, and (3) the winning participant is allocated the entire pot worth $np_0$. Typically, the contributions $p_0$ are decided ex ante during the formation of the rosca. As is common in previous literature, we will not model the selection process for $p_0$, but instead take it as given \citep[c.f.][]{klonner2001roscas}.

With the rosca contribution $p_0$ fixed, we can cast roscas as an abstract multi-round allocation problem, where every participant is allocated exactly one pot, and each pot is allocated to exactly one participant, illustrated in Alg~\ref{pseudo:rosca}.
Each participant's value for the allocations is described by a real-valued vector $\vvec_i=(v_i^1,\ldots, v_i^n)$, with $v_i^t$ representing participant $i$'s value for winning the pot in round $t$ and having access to that money at that time. We denote allocations by $\xvec=(\xvec_1,\ldots,\xvec_n)$, where $\xvec_i=(x_i^1,\ldots,x_i^n)$ is an indicator vector, and $x_i^t=1$ if and only if participant $i$ receives the pot in round $t$. Based on a common observation from previous literature, we can further assume that values for allocation are non-increasing over time: i.e., for $t<t'$ and any $i$, $v_i^t \geq v_i^{t'}$. This follows if rosca funds are used to make lumpy investments, e.g.,\ in a durable good, as is common in practice  \cite{besley1993economics, besley1994rotating, kovsted1999rotating, klonner2008private}. Participants prefer to own the good earlier rather than later, though different participants' values for owning the good earlier may vary.

\algrenewcommand\algorithmicindent{0.4em}
\begin{algorithm}[!ht]
\small
\caption{Rosca Multi-Round Allocation}
\textbf{Constants}: $n$: the number of participants and rounds in the rosca.  $p_0$: amount contributed by each participant to the pot in each round of the rosca.\\
\textbf{Inputs}: Valuations $\vvec$, where $v_i^t$ indicates the value to participant $i$ of winning the pot in round $t$. $Alloc$ an allocation mechanism. \\

For each round $t \in \{1, 2, \ldots, n\}$
\renewcommand{\labelenumi}{\bf \arabic{enumi}.}
\begin{enumerate}
    \item Each participant contributes $p_0$ into the pot
    \item $Alloc$ selects the winning participant (who has not yet won a round) 
    \item The winning participant receives the pot worth $np_0$
    \item \emph{Optional}: Some participants make payments based on $Alloc$, which are redistributed to the others as rebates.
\end{enumerate}
\renewcommand{\labelenumi}{\bf \alph{enumi}.}
\label{pseudo:rosca}
\end{algorithm}

\subsection{Roscas with Payments}

A variety of different pot allocation mechanisms are common in practice \citep[see][]{ardener1964comparative, bouman1995rotating}. This work considers roscas where participants make payments to influence their allocations, and assumes as a first-order approximation that participants are rational.
Payments in roscas take the form $\pvec=(\pvec_1,\ldots, \pvec_n)$, where $\pvec_i=(p_i^1\ldots, p_i^n)$. As participants' abilities to save money over time are typically limited, we assume participants' utilities are additively separable across rounds, but possibly nonlinear in money. That is, participant $i$ with value vector $\vvec_i$ has utility for allocation $\xvec_i$ and payments $\pvec_i$ given by
\begin{equation*}
    u_i^{\vvec_i}(\xvec_i,\pvec_i)=\vecx_i \cdot \vecv_i-\sum\nolimits_tC(p_i^t),
\end{equation*}
for some disutility function $C$ that is both increasing and satisfies $C(0)=0$. In a given round, $p_i^t$ could be positive, if the allocation mechanism requires $i$ to make payments, or negative, if a different participant's payments are redistributed to $i$. We refer to the latter as {\em rebates}, and assume all payments are redistributed each round, i.e.,\ $\sum_i p_i^t=0$ for all $t$.

A participant who makes positive payments in round $t$ has less money to spend in round $t$, and one who receives rebates in the form of negative payments has more to spend. The function $C(\cdot)$ describes participants' preferences for these changes in wealth. A more precise interpretation of $C(p_i^t)$ is as follows: assume that each participant $i$ has a per-round income of $w$. Without participating in the rosca, they would receive a utility $U(w)$ from consumption of that income, for some increasing consumption utility function $U$. Upon contributing $p_0$ into the rosca pot each round, the participant's baseline consumption utility is $U(w-p_0).$ If the rosca's allocation procedure requires additional payments (or distributes rebates) of $p_i^t$, a participant's utility from consumption becomes $U(w-p_0-p_i^t)$. The disutility function $C$ then represents the participant's difference in utility for consumption, $$C(p)=U(w-p_0)-U(w-p_0-p),$$ which is increasing.

A large body of anthropological and empirical work on roscas shows that participants in the same rosca tend to have similar economic circumstances \citep[see][]{ardener1964comparative, aredo2004rotating, mequanent1996role}. So, following the theory literature, we assume $U$ and $w$ (and hence $C$) are identical across participants, even if the value for receiving the pot differ between participants \citep{besley1994rotating, kovsted1999rotating, klonner2001roscas}.  It is typical to assume consumption utility $U$ is weakly concave, and hence $C$ is weakly convex \citep{anderson2002economics, klonner2003rotating, klonner2001roscas}. The special case of {\em quasilinear utilities}, where $C(p)=p$, is especially well-studied in the algorithmic game theory literature.

To measure allocative performance of a rosca, we study the participants’ total utility: $$\textsc{Wel}^{\vecv }(\vecx,\vecp)=\sum\nolimits_iu_i^{\vecv_i}(\vecx_i,\vecp_i).$$
 Following the interpretation of $C$ in terms of consumption utility $U$, $\textsc{Wel}^{\vecv}(\vecx,\vecp)$ represents the gain in utility to all participants for a given allocation $\vecx$ and payments $\vecp$, above the baseline total utility of $n^2U(w-p_0)$, obtained by each of the $n$ participants obtaining utility $U(w-p_0)$ for $n$ rounds. Among all possible matchings $\vecx$ and payment profiles $\vecp \in \mathbb{R}^{n \times n}$, the optimal welfare-outcome is given by the maximum-weight matching ${\vecx}^*=\argmax_{\vecx} \sum_i{\vecx_i \cdot \vecv_i}$ and payments ${\vecp}^*=(0,\ldots,0)$, whose welfare is denoted $\textsc{OPT}(\vecv)=\textsc{Wel}^{\vecv}({\vecx}^*, \vecp^*)$.

To quantify the inefficiency of a rosca outcome $(\vecx,\vecp)$, we study the approximation ratio $\textsc{OPT}(\vecv)/\textsc{Wel}^\vecv(\vecx,\vecp)$. When rosca outcomes are equilibria of auctions, as in Section~\ref{sec:auction}, this ratio is also known as the price of anarchy (PoA).

\vspace{2mm}
\noindent \textbf{Roadmap.} The remainder of this paper proceeds as follows: In Section~\ref{sec:auction}, we prove a constant-approximation for auction roscas. We do the same for swap roscas in Section~\ref{sec:swap}. Both sets of results focus on quasilinear utilities, where $C(p)=p$, and hence all welfare loss comes from allocative inefficiency. We extend these results to nonlinear utilities in the supplement. In Section~\ref{sec:swap}, we further conduct experiments to study the impact of nonlinear utility on swap rosca welfare. We give directions for future work in Section~\ref{sec:conclusion}.

\section{Auction Roscas}
\label{sec:auction}
\newcommand{\vecb}{{\mathbf b}}
\newcommand{\veca}{{\mathbf a}}

Auctions are a common mechanism for allocating pots in roscas \cite{ardener1964comparative, bouman1995rosca, klonner2003buying}. Two major sources of variety in auction roscas are (1) when bids are solicited from participants and (2) the type of auction run for the bidding process. The bidding may occur either at the beginning, in which case a single (up-front) auction determines the full schedule of pot allocations, or sequentially, in which case a separate auction is held each period to determine the allocation for the corresponding pot. We consider sequential first- and second-price (equivalently, ascending- and descending-price) auctions, as well as up-front all-pay-style auctions. Payments are typically redistributed as rebates among all of the non-winning participants.

The fact that outcomes depend on participants' bidding behavior complicates our analysis. We assume participants play a {\em Nash equilibrium (NE)} of the rosca's auction game. That is, their bidding strategy maximizes their utility given the bidding strategies of other participants. Our analysis will use the {\em smoothness} framework of \citet{syrgkanis2013composable}, along with new arguments to handle rosca-specific obstacles. We assume participants have quasilinear utilities. 

\subsection{Proof Template: Up-Front Roscas}
\label{sec:upfront}

We begin our analysis by considering roscas with up-front bidding. In an up-front 
rosca, each participant $i$ submits a bid $b_i$ at the beginning of the rosca. Participants pay their bids, and are then assigned pots in decreasing order of their bids, with the highest participant receiving the pot in round $1$, and so on. Each participant $i$'s payments are redistributed evenly among the other participants in the form of reduced per-period payments into the rosca. Under quasilinear utilities, it is not relevant to the participants' utilities what round payments are made; the only relevant outcome is total payments, which we write as $p_i=\sum_tp_i^t$ when context allows. We can further assume per-period payments remain fixed and that the participants receive the redistributed payments up-front in the form of a rebate. We decompose the participants' total payments into their {\em gross payments} $\hat p_i$ and {\em rebates} $\hat r_i$, with $p_i=\hat p_i-\hat r_i$.  Formally:  
\begin{definition}
In an {\em up-front rosca} with quasilinear participants, each participant $i$ submits a bid $b_i$, with $\vecb=(b_1,\ldots, b_n)$. Let $r_i$ denote the rank of participant $i$'s bid. Allocations are $x_i^{t}(\vecb)=1$ if $t=r_i$ and $0$ otherwise. Participant $i$'s gross payment is $\hat p_i(\vecb)=b_i$, and their rebate is $\hat r_i(\vecb)=\sum_{i'\neq i}b_{i'}/(n-1)$.
\end{definition}

Our auction rosca analyses all follow from a two-step argument. First, we use or modify the smoothness framework of \citet{syrgkanis2013composable} to obtain a tradeoff between participants' utilities and their gross payments. Without rebates, typical auction analyses conclude by noting that high payments imply high welfare. However, because gross payments in roscas are redistributed, it could happen that both gross payments and rebates are high, but welfare is low. Our second step is to rule out this problem. For up-front roscas, we can demonstrate both steps simply.

The first step follows from Lemma A.20 of \citet{syrgkanis2013composable}:

\begin{lemma}\label{lem:upfront}
    With quasilinear participants, any Nash equilibrium $\vecb$ of any up-front rosca with values $\vecv$ satisfies
    \begin{equation}\label{eq:upfrontsmooth}
        \sum\nolimits_i u_i^{\vecv_i}(\vecb)\geq\tfrac{1}{2}\textsc{OPT}(\vecv)-\sum\nolimits_i\hat p_i(\vecb).
    \end{equation}
\end{lemma}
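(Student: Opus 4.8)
The plan is to obtain the inequality as the Nash-equilibrium consequence of a $(\tfrac12,1)$-smoothness property of the up-front rosca's allocation-and-gross-payment rule, exactly in the spirit of Lemma A.20 of \citet{syrgkanis2013composable}. Concretely, I would exhibit for each participant $i$ a single deviating bid $a_i^*$ that depends only on the valuations $\vecv$ (crucially \emph{not} on the realized bids, so that the argument survives for mixed equilibria) and show that for every bid profile $\vecb$,
\begin{equation*}
\sum\nolimits_i \big(v_i^{t_i(a_i^*,\vecb_{-i})}-a_i^*\big)\ \geq\ \tfrac12\textsc{OPT}(\vecv)-2\sum\nolimits_i\hat p_i(\vecb),
\end{equation*}
where $t_i(a_i^*,\vecb_{-i})$ is the round $i$ receives after deviating. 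Summing the Nash condition $u_i^{\vecv_i}(\vecb)\ge u_i^{\vecv_i}(a_i^*,\vecb_{-i})$ over $i$ then yields the claim once I invoke two rosca-specific facts: the rebate $\hat r_i=\sum_{i'\ne i}b_{i'}/(n-1)$ is unchanged when $i$ alone deviates, and $\sum_i\hat r_i(\vecb)=\sum_i\hat p_i(\vecb)$ because all gross payments are redistributed. This second identity is precisely what converts the ``$-2\sum\hat p_i$'' above into the ``$-\sum\hat p_i$'' of the statement.

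For the deviation, let $\sigma^*$ denote the round assigned to each participant in the welfare-optimal matching, so $\textsc{OPT}(\vecv)=\sum_i v_i^{\sigma^*(i)}$, and set $a_i^*=\tfrac12 v_i^{\sigma^*(i)}$. Fixing $\vecb_{-i}$, I would split on the round $i$ lands in. If $t_i\le\sigma^*(i)$, then since values are non-increasing in time $v_i^{t_i}\ge v_i^{\sigma^*(i)}$, so $v_i^{t_i}-a_i^*\ge\tfrac12 v_i^{\sigma^*(i)}$. If instead $t_i>\sigma^*(i)$, then at least $\sigma^*(i)$ of the other participants outbid $a_i^*$, so the $\sigma^*(i)$-th largest bid in the whole profile exceeds $\tfrac12 v_i^{\sigma^*(i)}$; for these ``bad'' participants I only use the crude bound $v_i^{t_i}-a_i^*\ge-\tfrac12 v_i^{\sigma^*(i)}$.

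Summing the two cases and writing $\tfrac12\textsc{OPT}=\sum_i\tfrac12 v_i^{\sigma^*(i)}$, the bad participants contribute a deficit of $\sum_{\text{bad }i}v_i^{\sigma^*(i)}$, and the real task is to charge this deficit against the total gross payment. This is the step I expect to be the crux, and it is exactly where the positional/matching structure bites, since a single large bid could be ``blamed'' by several participants simultaneously. The fix is that the rounds $\{\sigma^*(i):i\text{ bad}\}$ are \emph{distinct} (as $\sigma^*$ is a bijection), and for each bad $i$ the $\sigma^*(i)$-th largest bid $b_{(\sigma^*(i))}$ already exceeds $\tfrac12 v_i^{\sigma^*(i)}$; summing these distinct order statistics gives $\sum_{\text{bad }i} b_{(\sigma^*(i))}\le\sum_t b_{(t)}=\sum_i\hat p_i(\vecb)$, whence the deficit $\sum_{\text{bad }i}v_i^{\sigma^*(i)}\le 2\sum_i\hat p_i(\vecb)$ — exactly the slack the displayed inequality allows. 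Bundling everything gives the smoothness bound and hence, via the Nash step, Lemma~\ref{lem:upfront}. The other point I would be careful to flag is the insistence on a valuation-only deviation: the tempting alternative of shading against the realized winning bids depends on $\vecb_{-i}$ and would spuriously suggest full efficiency, whereas the half-value deviation is what keeps the bound valid beyond pure equilibria.
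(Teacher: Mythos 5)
Your proposal is correct and follows essentially the route the paper intends: the paper simply cites the smoothness of all-pay position auctions (Lemma A.20 of \citet{syrgkanis2013composable}), and you carry out exactly that argument with the deterministic half-value deviation $a_i^*=\tfrac12 v_i^{\sigma^*(i)}$, the non-increasing-values case split, and the charging of the ``bad'' participants' deficit against distinct order statistics of the bid vector. Your bookkeeping is also right where it matters: the deviation leaves $\hat r_i$ unchanged, and the identity $\sum_i\hat r_i(\vecb)=\sum_i\hat p_i(\vecb)$ converts the $(\tfrac12,2)$ gross-payment tradeoff into the coefficient $1$ appearing in (\ref{eq:upfrontsmooth}).
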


The left hand side of (\ref{eq:upfrontsmooth}) is the equilibrium welfare. It therefore suffices for the second step to upper bound the gross payments on the right hand side.

\begin{lemma}\label{lem:upfrontoverbid}
    Let $\vecb$ be a Nash equilibrium of an up-front rosca with quasilinear participants and values $\vecv$. Then, for any participant $i$, $\hat p_i(\vecb)\leq \vecv_i\cdot\vecx_i(\vecb)$.
\end{lemma}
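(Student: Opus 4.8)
The statement is a ``no-overbidding'' bound. Writing $r_i$ for the round that participant $i$ wins under $\vecb$, so that $\vecv_i\cdot\vecx_i(\vecb)=v_i^{r_i}$, and recalling $\hat p_i(\vecb)=b_i$, the claim is exactly $b_i\le v_i^{r_i}$. The plan is to derive this from the Nash best-response condition for participant $i$ together with a single well-chosen unilateral deviation.

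The key structural observation is that $i$'s rebate $\hat r_i(\vecb)=\sum_{i'\ne i}b_{i'}/(n-1)$ depends only on the \emph{other} participants' bids, and is therefore invariant under any change to $b_i$. Hence, fixing $\vecb_{-i}$ and writing $\rho(b)$ for the round $i$ would win by submitting bid $b$, the quasilinear utility of $i$ as a function of its own bid is $u_i^{\vecv_i}(b,\vecb_{-i})=v_i^{\rho(b)}-b+\hat r_i(\vecb)$, in which only the first two terms vary with $b$. Maximizing $i$'s utility is thus equivalent to maximizing $v_i^{\rho(b)}-b$, and the equilibrium bid $b_i$ must be optimal in this one-dimensional problem.

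I would then test the deviation to the minimal admissible bid $b=0$. This costs zero in gross payment and secures some round $\rho(0)$ with value $v_i^{\rho(0)}\ge v_i^n\ge 0$, where the first inequality holds because values are non-increasing in $t$ (so every round's value is at least the last round's) and the second because values for receiving the pot are non-negative. The Nash condition $u_i^{\vecv_i}(\vecb)\ge u_i^{\vecv_i}(0,\vecb_{-i})$ then reads $v_i^{r_i}-b_i+\hat r_i(\vecb)\ge v_i^{\rho(0)}-0+\hat r_i(\vecb)$; cancelling the common rebate term yields $b_i\le v_i^{r_i}-v_i^{\rho(0)}\le v_i^{r_i}$, which is the desired bound.

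The only delicate point is the well-posedness of this deviation rather than any substantive inequality. I must use that the bid space is non-negative: otherwise a runaway negative bid would drive the gross payment $b_i$, and hence $i$'s utility, arbitrarily high, and no equilibrium would exist. Once bids are restricted to $b\ge 0$, the deviation $b=0$ is admissible with zero gross payment, the invariance of $\hat r_i(\vecb)$ under $i$'s own deviation does all the work, and the single-deviation inequality above closes the proof.
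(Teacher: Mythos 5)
Your proof is correct and takes essentially the same route as the paper's: both compare the equilibrium bid against the unilateral deviation to a bid of $0$, and both hinge on the observation that $\hat r_i$ is unaffected by $i$'s own bid, together with nonnegativity of the value obtained at the deviation. The paper phrases it as a one-line contradiction while you argue directly, but the content is identical.
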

\begin{proof}
Assume for some $i$ that $\hat p_i(\vecb)> \vecv_i\cdot\vecx_i(\vecb)$. Then participant $i$ must be overbidding. They could improve their utility by bidding $0$, which, in an up-front rosca, does not change their rebates: $\hat r_i(0,\vecb_{-i})=\hat r_i(\vecb)> \vecv_i\cdot \vecx_i(\vecb)-\hat p_i(\vecb)+\hat r_i(\vecb)$.
\end{proof}

Since $\sum_i\vecv_i\cdot\vecx_i(\vecb)$ is equal to the equilibrium welfare, Lemmas~\ref{lem:upfront} and \ref{lem:upfrontoverbid} together imply the following.

\begin{theorem}\label{thm:upfront}
With quasilinear participants, every Nash equilibrium of an up-front rosca has PoA at most $4$.
\end{theorem}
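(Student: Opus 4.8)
The plan is to combine the two preceding lemmas with the observation that, under quasilinear utilities and full redistribution of payments, the equilibrium welfare equals the total equilibrium value $\sum_i \vecv_i \cdot \vecx_i(\vecb)$. First I would verify this identity: writing each participant's total payment as $p_i = \hat p_i(\vecb) - \hat r_i(\vecb)$, the definition of the rebate gives $\sum_i \hat r_i(\vecb) = \sum_i \sum_{i'\neq i} b_{i'}/(n-1) = \sum_i b_i = \sum_i \hat p_i(\vecb)$, so $\sum_i p_i = 0$. Hence $\textsc{Wel}^{\vecv}(\vecb) = \sum_i \vecv_i \cdot \vecx_i(\vecb) - \sum_i p_i = \sum_i \vecv_i \cdot \vecx_i(\vecb)$, which is exactly the left-hand side of (\ref{eq:upfrontsmooth}).

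Next I would sum the per-participant bound of Lemma~\ref{lem:upfrontoverbid} over all $i$ to obtain $\sum_i \hat p_i(\vecb) \leq \sum_i \vecv_i \cdot \vecx_i(\vecb) = \textsc{Wel}^{\vecv}(\vecb)$. Substituting this into the guarantee of Lemma~\ref{lem:upfront} yields $\textsc{Wel}^{\vecv}(\vecb) \geq \tfrac{1}{2}\textsc{OPT}(\vecv) - \textsc{Wel}^{\vecv}(\vecb)$, i.e.\ $2\,\textsc{Wel}^{\vecv}(\vecb) \geq \tfrac{1}{2}\textsc{OPT}(\vecv)$. Rearranging gives $\textsc{Wel}^{\vecv}(\vecb) \geq \tfrac{1}{4}\textsc{OPT}(\vecv)$, so the approximation ratio $\textsc{OPT}(\vecv)/\textsc{Wel}^{\vecv}(\vecb)$ is at most $4$, as claimed.

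Since both ingredients are already established, the argument is a short algebraic assembly rather than a genuinely difficult step; the only subtlety worth stating carefully is the first one---that the redistribution constraint $\sum_i p_i^t = 0$ collapses the welfare to the pure allocation term, so that the left-hand side of the smoothness inequality really does coincide with the equilibrium welfare and the two lemmas can be chained. I would therefore expect the main content of the writeup to be this identity, with the factor-$4$ bound following mechanically from the factor-$\tfrac{1}{2}$ in Lemma~\ref{lem:upfront} and the gross-payment bound of Lemma~\ref{lem:upfrontoverbid}.
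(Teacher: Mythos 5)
Your proposal is correct and matches the paper's own (very terse) argument: the paper likewise notes that $\sum_i\vecv_i\cdot\vecx_i(\vecb)$ equals the equilibrium welfare and then chains Lemmas~\ref{lem:upfront} and \ref{lem:upfrontoverbid} to get $\textsc{Wel}^{\vecv}(\vecb)\geq\tfrac12\textsc{OPT}(\vecv)-\textsc{Wel}^{\vecv}(\vecb)$, hence PoA at most $4$. Your explicit verification that full redistribution makes net payments vanish from the welfare is the right (and only) subtlety to spell out.
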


\subsection{Sequential Roscas}
\label{sec:sequential}

We now consider roscas with separate sequentially-held first- or second-price auctions for each pot as opposed to the single-auction format from the previous section.

\begin{definition}
A {\em first-price rosca} runs a first-price auction in each round.

That is, if the highest-bidding participant in round $t$ among those who have not yet won is participant $i^*$, with bid $b_{i^*}^t$, then $x_{i^*}^t=1$, $x_i^t=0$ for all other participants $i$. The gross payments are $\hat p_{i^*}^t=b_{i^*}^t$ and $\hat p_i^t=0$ otherwise. The rebates are $\hat r_i^t=b_{i^*}^t/(n-1)$ for all $i\neq i^*$.
\end{definition}

\begin{definition}
A {\em second-price rosca} runs a second-price auction in each round.
That is, if the highest-bidding participant in round $t$ among those who have not yet won is participant $i^*$, with second-highest bid $b_{(2)}^t$, then $x_{i^*}^t=1$, $x_i^t=0$ for all other participants $i$. The gross payments are $\hat p_{i^*}^t=b_{(2)}^t$ and $\hat p_i^t=0$ otherwise. The rebates are $\hat r_i^t=b_{(2)}^t/(n-1)$ for all $i\neq i^*$.
\end{definition}

Sequential auctions require a monitoring scheme in which the auctioneer discloses information about participants' bids after each round. Our results will hold for any deterministic monitoring scheme. A key subtlety is that participants' actions are now {\em behavioral strategies}: that is, at each stage, participants observe the disclosed history of play so far and can condition their future bids on this history. We denote the vector of behavioral strategies by $\veca=(a_1,\ldots,a_n)$, and denote by $b_i^t$ participant $i$'s bid in a round $t$.

As with up-front roscas, we first derive a tradeoff between utility and gross payments, and second consider the impact of rebates. The sequential format complicates both steps. Our first step will follow from a novel composition argument, where we show that both first- and second-price roscas inherit a tradeoff from their single-item analogs. For second-price roscas, a standard no-overbidding assumption then bounds the auction's rebates and implies a welfare bound. For first-price roscas, we give a more involved analysis that bounds overbidding and yields an unconditional guarantee. Overbidding can both occur in equilibrium and harm welfare, so such an analysis is necessary.

Observe that, first- and second-price roscas can be thought of as the sequential composition $n$ single-item auctions, with a rule excluding past winners. Formally:

\begin{definition}[Round-Robin Composition]
Given a single-item auction $M$, the {\em$n$-item round-robin composition} of $M$ is a multi-round allocation mechanism for $n$ items using the following procedure: During each round $t$, each participant $i$ who has not yet been allocated an item submits a bid $b_i^t$. The mechanism then runs $M$ among the remaining participants to determine the allocation and payments for that round.
\end{definition}

The following definition of {\em smoothness}, adapted from \citet{syrgkanis2013composable}, lets us characterize both first- and second-price roscas with the same framework. For our purposes, it applies to any auction where in round $t$, each bidder who has not yet won submits a real-valued bid $b_i^t$, which we term {\em sequential single-bid auctions}. Note that this includes single-item auctions. We will show that smoothness of single-item auctions implies smoothness of their round-robin composition.

\begin{definition}
Let $M$ be a sequential single-bid auction. We say $M$ is $(\lambda,\mu_1,\mu_2)$-smooth if for every value profile $\vecv$ and action profile $\veca$, there exists a randomized action $a_i^*(a_i,\vecv)$ for each $i$ such that:
\begin{align*}
    \sum\nolimits_i(\vecv_i\cdot &\vecx_i(a_i^*(a_i,\vecv),\veca_{-i})-p_i(a_i^*(a_i,\vecv),\veca_{-i}) \geq \\ &\lambda \textsc{OPT}(\vecv)-\mu_1\sum\nolimits_i \hat p_i(\veca)-\mu_2\sum\nolimits_i B_i(\veca),
\end{align*}
where $B_i(\veca)$ is $i$'s bid in the round where they win, or $0$ if no such round exists.
\end{definition}

\citet{syrgkanis2013composable} show that single-item first-price and second-price auctions are $(1-1/e,1,0)$-smooth, and $(1,0,1)$-smooth, respectively. However, the smoothness result they prove for a form of sequential composition fails to hold for round-robin composition, due to the cardinality constraint on allocations as in our setting. Here, we instead give a new composition argument tailored specifically to the rosca setting, that relies on values decreasing in time. Our composition result follows the following useful definition: 

\begin{definition}
    A single-item mechanism $M$ with allocation rule $\vecx$ and payments $\vecp$ is {\em strongly individually rational (IR)} if (1) for every profile of actions $\veca$, $x_i(\veca)=0$ only if $\hat p_i(\veca)=0$, and (2) there exists an action $\bot$ such that for all $i$ and $\veca_{-i}$, $\hat p_i(\bot,\veca_{-i})=0$.
\end{definition}

\begin{restatable}{lemma}{lemroundrobin}\label{lem:roundrobin}
    Let $M$ be a strongly individually-rational single-item mechanism. If $M$ is $(\lambda,\mu_1,\mu_2)$-smooth for $\lambda\leq1$ and $\mu_1,\mu_2\geq 0$, then its round-robin composition is $(\lambda,\mu_1+1,\mu_2)$-smooth as long as $v_i^t\geq v_i^{t+1}$ for all $i$ and $t$.
\end{restatable}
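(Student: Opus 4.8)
The plan is to prove smoothness of the round-robin composition $M^{RR}$ by constructing, for each player $i$, a behavioral deviation that targets the round $i$ is assigned in an optimal matching, and then invoking the single-item smoothness of $M$ inside that one round. Fix a value profile $\vecv$ and a strategy profile $\veca$, and let $\vecx^*$ be the welfare-optimal matching, writing $t_i^*$ for the round assigned to $i$ (so $i\mapsto t_i^*$ is a bijection between players and rounds, and $\sum_i v_i^{t_i^*}=\opt(\vecv)$). For the deviation $a_i^*$ I would have $i$ play the opt-out action $\bot$ in every round $t<t_i^*$, play $M$'s single-item smooth deviation $a_i^*(\cdot,v_i^{t_i^*})$ in round $t_i^*$ (if still unallocated), and play $\bot$ thereafter.

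Two structural properties of $M$ make this deviation admissible. First, strong IR part (2) guarantees that bidding $\bot$ incurs zero gross payment regardless of the others' bids, so $i$ pays nothing outside round $t_i^*$; part (1) guarantees that a non-winner pays nothing, which I will use to lower-bound the non-targeted players' contributions by zero. Second, the decreasing-values hypothesis $v_i^t\geq v_i^{t+1}$ ensures that if bidding $\bot$ ever causes $i$ to win some round $s<t_i^*$ for free, the resulting utility $v_i^s\geq v_i^{t_i^*}$ already exceeds the target; otherwise $i$ remains eligible and reaches round $t_i^*$. In the latter (main) case I would apply the single-item smoothness of $M$ within round $t_i^*$. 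The key point is that for the relevant single-bid auctions this is a \emph{per-deviating-player} guarantee: against whatever the other still-eligible players bid in that round, $i$'s deviation secures
\[
  E\bigl[u_i^{t_i^*}\bigr]\;\geq\;\lambda\, v_i^{t_i^*}\;-\;\mu_1\,\hat p^{\,t_i^*}\;-\;\mu_2\,B^{\,t_i^*},
\]
where $\hat p^{\,t_i^*}$ and $B^{\,t_i^*}$ are the gross payment and winning bid in round $t_i^*$. Combined with the zero payments in all other rounds, this yields a per-scenario lower bound on $u_i^{RR}(a_i^*,\veca_{-i})$.

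Summing over $i$ then assembles the global bound. Because the target rounds $\{t_i^*\}_i$ are exactly $\{1,\dots,n\}$, the per-round terms telescope across scenarios: $\sum_i\lambda v_i^{t_i^*}=\lambda\,\opt(\vecv)$, the winning-bid terms sum to $\sum_i B_i(\veca)$, and the per-round gross payments sum to $\sum_i\hat p_i(\veca)$. Carried out naively this would already give $(\lambda,\mu_1,\mu_2)$-smoothness.

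The hard part---and the source of the extra $+1$ on $\mu_1$---is that the payment $\hat p^{\,t_i^*}$ that $i$ faces in round $t_i^*$ is measured in the \emph{perturbed} run in which $i$ plays $\bot$ early, not in the actual run of $\veca$. Removing $i$ from the early rounds reshuffles which players win those rounds, so the eligible set and the bids in round $t_i^*$ differ from the actual run, and the clean telescoping above is not literally valid. The plan is to bound each perturbed round-$t_i^*$ price by the corresponding actual-run gross payment plus a correction for this displacement, using strong IR (non-winners never contribute spurious payments) and the decreasing-values property (which controls the bids of players forced into earlier rounds by $i$'s absence). I expect this reconciliation to be the crux of the argument, and I expect the aggregate correction, summed over all $n$ scenarios, to be at most one further copy of $\sum_i\hat p_i(\veca)$---exactly upgrading $\mu_1$ to $\mu_1+1$ and yielding $(\lambda,\mu_1+1,\mu_2)$-smoothness.
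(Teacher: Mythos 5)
Your construction differs from the paper's in one decision that turns out to be fatal: you have the deviating participant play $\bot$ in every round before $t_i^*$, whereas the paper has them \emph{simulate their own equilibrium strategy} $a_i$ through round $t_i^*-1$. The other participants play behavioral strategies that condition on the disclosed history, so opting out early changes that history (different disclosed bids, possibly a different winner in rounds where $i$ would have won), and the opponents' bids in round $t_i^*$ of your deviation run can then be arbitrary --- e.g.\ a profile in which opponents bid astronomically in round $t_i^*$ whenever they observe $i$ bidding $0$ early is a perfectly legal $\veca_{-i}$. The single-item smoothness bound you invoke in round $t_i^*$ is therefore stated in terms of the \emph{perturbed-run} prices $\hat p^{\,t_i^*}$ and winning bid $B^{\,t_i^*}$, and nothing --- not strong IR, not decreasing values --- relates those to the actual run of $\veca$, which is what the right-hand side of the smoothness definition requires. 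You correctly flag this reconciliation as ``the crux,'' but the step you defer cannot be carried out for your deviation; it is not a bookkeeping correction worth one extra copy of $\sum_i\hat p_i(\veca)$, it is an unbounded loss.

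The paper's choice of simulating $a_i$ makes the history, and hence every opponent's bid in every round up to and including $t_i^*$, identical to the actual run, so the round-$t_i^*$ smoothness bound is already in actual-run quantities and your intended telescoping over the distinct rounds $\{t_i^*\}$ goes through verbatim, giving the $\mu_1\sum_i\hat p_i(\veca)$ and $\mu_2\sum_iB_i(\veca)$ terms. The price of this choice is the complication you tried to avoid: while simulating, $i$ may win some round $\hat t_i\le t_i^*$ and be excluded. That case is handled exactly where you use monotonicity ($v_i^{\hat t_i}\ge v_i^{t_i^*}$), except that $i$ there pays their equilibrium payment $\hat p_i(\veca)$ rather than zero; summing those payments over the early winners costs at most one additional copy of $\sum_i\hat p_i(\veca)$, and \emph{that} is the true source of the $+1$ on $\mu_1$. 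So your high-level architecture (target the optimal round, invoke single-item smoothness there, sum over a bijection) matches the paper, but the deviation itself must be repaired as above before the argument is valid.
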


Our proof of this lemma, presented in the supplement, augments the main idea from the \citet{syrgkanis2013composable} composition result with ideas from \citet{KKT15}, who consider smoothness of non-sequential mechanisms for cardinality-constrained allocation environments. As a corollary of Lemma~\ref{lem:roundrobin}, we obtain that first- and second-price roscas are respectively $(1-1/e,2,0)$ and $(1,1,1)$-smooth.

We next analyze the impact of rebates. If we assume no participant overbids, then payments (and hence rebates) are necessarily bounded by values, and we obtain a similar conclusion to Lemma~\ref{lem:upfrontoverbid}. Moreover, we show in the supplement that an overbidding assumption is necessary for second-price roscas, as is often the case for auctions with second-price payments. The overbidding assumption we require is as follows:

\begin{definition}\label{def:nob}
    Action profile $\veca$ satisfies {\em no-overbidding} if $B_i(\veca)\leq \vecv_i\cdot \vecx_i(\veca)$ for every participant $i$.
\end{definition}

\begin{theorem}\label{thm:smoothrosca}
    Let $M$ be a strongly IR, single-item auction that is $(\lambda,\mu_1,\mu_2)$-smooth, with $\lambda\leq1$. With quasilinear participants, every no-overbidding Nash equilibrium of the corresponding auction rosca with rebates has PoA at most $(2+\mu_1+\mu_2)/\lambda$.
\end{theorem}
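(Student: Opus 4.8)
The plan is to mirror the two-step template used for up-front roscas (Lemmas~\ref{lem:upfront} and~\ref{lem:upfrontoverbid}), replacing the up-front smoothness bound with the round-robin composition guarantee of Lemma~\ref{lem:roundrobin} and replacing the ad hoc overbidding bound of Lemma~\ref{lem:upfrontoverbid} with the no-overbidding hypothesis. Throughout I fix a no-overbidding Nash equilibrium $\veca$ and write $\textsc{Wel}(\veca)=\sum_i\vecv_i\cdot\vecx_i(\veca)$ for its welfare. The single most useful identity is the redistribution identity: since all payments are rebated each round ($\sum_i p_i^t=0$), the net transfers cancel in aggregate, so $\sum_i u_i^{\vecv_i}(\veca)=\sum_i\vecv_i\cdot\vecx_i(\veca)=\textsc{Wel}(\veca)$. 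Equilibrium welfare is thus exactly the summed equilibrium utility, which is precisely the quantity smoothness lower-bounds.

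\emph{First step (tradeoff).} Since $M$ is strongly IR and $(\lambda,\mu_1,\mu_2)$-smooth with $\lambda\le1$, and since values are non-increasing in time, Lemma~\ref{lem:roundrobin} gives that the rosca's underlying round-robin composition is $(\lambda,\mu_1+1,\mu_2)$-smooth. I then invoke the smoothness deviations $a_i^*$ at the profile $\veca$: the Nash condition gives $u_i^{\vecv_i}(\veca)\ge u_i^{\vecv_i}(a_i^*,\veca_{-i})$ for each $i$, and summing over $i$ and applying the smoothness inequality yields
\[
\textsc{Wel}(\veca)=\sum\nolimits_i u_i^{\vecv_i}(\veca)\ge \lambda\,\textsc{OPT}(\vecv)-(\mu_1+1)\sum\nolimits_i\hat p_i(\veca)-\mu_2\sum\nolimits_i B_i(\veca).
\]
A point to verify here is that the net utility entering the Nash condition, which includes the rebate $i$ collects and which changes when $i$ deviates, is consistent with the deviation payoff in the smoothness inequality; since rebates are nonnegative this causes no loss, so the Nash inequality still dominates the right-hand side of smoothness.

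\emph{Second step (bounding payments and bids).} It remains to bound both sums on the right by $\textsc{Wel}(\veca)$. Strong IR forces $\hat p_i(\veca)=0$ in every round $i$ does not win, so $i$'s gross payment is incurred only in its winning round and is at most its winning bid $B_i(\veca)$ (with equality for first price, and the second-highest bid, hence no larger, for second price). No-overbidding then gives $B_i(\veca)\le\vecv_i\cdot\vecx_i(\veca)$. Summing, $\sum_i\hat p_i(\veca)\le\sum_i B_i(\veca)\le\sum_i\vecv_i\cdot\vecx_i(\veca)=\textsc{Wel}(\veca)$. Substituting both bounds into the tradeoff gives $\textsc{Wel}(\veca)\ge\lambda\,\textsc{OPT}(\vecv)-(\mu_1+1+\mu_2)\,\textsc{Wel}(\veca)$, and rearranging yields $\textsc{OPT}(\vecv)/\textsc{Wel}(\veca)\le(2+\mu_1+\mu_2)/\lambda$, as claimed; note the outer ``$2$'' splits as one unit from the $\textsc{Wel}(\veca)$ term moved to the left and one from the round-robin composition penalty in Lemma~\ref{lem:roundrobin}.

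I expect the main obstacle to be the first step rather than the routine second step: getting the smoothness deviations to combine cleanly with redistribution, i.e.\ confirming that the ``$+1$'' penalty from round-robin composition is the only extra distortion introduced by the round-robin (win-once) constraint and that the rebate layer does not corrupt the deviation utilities. The second step is where no-overbidding is essential: without it, winning bids (and hence rebates) could exceed values, the self-bounding estimate $\sum_i B_i(\veca)\le\textsc{Wel}(\veca)$ would fail, and the final rearrangement would break. This is exactly why the theorem assumes no-overbidding and why the supplement must separately argue that this assumption is necessary for second-price roscas.
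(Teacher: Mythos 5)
Your proposal is correct and follows essentially the same route as the paper's proof: invoke Lemma~\ref{lem:roundrobin} to get $(\lambda,1+\mu_1,\mu_2)$-smoothness of the pre-rebate rosca, use the Nash condition plus nonnegativity of rebates to pass to the smoothness deviation, bound $\sum_i\hat p_i(\veca)\le\sum_i B_i(\veca)\le\sum_i\vecv_i\cdot\vecx_i(\veca)$ via strong IR and no-overbidding, and close by noting that both $\sum_i u_i^{\vecv_i}(\veca)$ and $\sum_i\vecv_i\cdot\vecx_i(\veca)$ equal the equilibrium welfare. The side remarks you flag (rebates not corrupting the deviation utilities, and the provenance of the outer $2$) are exactly the points the paper's chain of inequalities handles.
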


\begin{proof}
    Lemma~\ref{lem:roundrobin} implies that the rosca is $(\lambda,1+\mu_1,\mu_2)$-smooth before rebates. We can therefore write:
    \begin{align*}
        \sum\nolimits_i u_i^{\vecv_i}(\veca)&\geq \sum\nolimits_iu_i^{\vecv_i}(a_i^*,\veca_{-i})\\
        &\geq \sum\nolimits_i( \vecv_i\cdot \vecx_i(a_i^*,\veca_{-i})-\hat p_i(a_i^*,\veca_{-i}))\\
        &\geq \lambda\textsc{OPT}(\vecv)-(1+\mu_1)\sum\nolimits_i\hat p_i(\veca) \\
        &~~~~~~~~~~~~~~~~~~~~~-\mu_2\sum\nolimits_iB_i(\veca)\\
        &\geq \lambda\textsc{OPT}(\vecv)-(1+\mu_1+\mu_2)\sum\nolimits_iB_i(\veca)\\
        &\geq \lambda\textsc{OPT}(\vecv)-(1+\mu_1+\mu_2)\sum\nolimits_i\vecv_i\cdot \vecx_i(\veca)
    \end{align*}
    Since both $\sum_i u_i^{\vecv_i}(\veca)$ and $\sum_i \vecv_i\cdot \vecx_i(\veca)$ are equal to equilibrium welfare, the result follows.
\end{proof}

\begin{corollary}
    For quasilinear participants, any Nash equilibrium of the first-price rosca satisfying no-overbidding has PoA at most $3e/(e-1)$.
\end{corollary}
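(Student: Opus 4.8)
The plan is to obtain this as a direct instantiation of Theorem~\ref{thm:smoothrosca}, whose hypotheses require a single-item auction that is strongly IR and $(\lambda,\mu_1,\mu_2)$-smooth with $\lambda\leq 1$. I would take $M$ to be the single-item first-price auction and feed the correct parameters into the theorem; almost all of the real work (the round-robin composition of Lemma~\ref{lem:roundrobin} and the rebate bound) is already absorbed into Theorem~\ref{thm:smoothrosca}.

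First I would record the smoothness parameters. As noted in the text, \citet{syrgkanis2013composable} establish that the single-item first-price auction is $(1-1/e,1,0)$-smooth, so here $\lambda=1-1/e$, $\mu_1=1$, and $\mu_2=0$. Since $\lambda=1-1/e\leq 1$, the hypothesis $\lambda\leq 1$ of the theorem is satisfied.

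Second, I would verify that the single-item first-price auction is strongly IR, which is the only genuine content of the argument. Condition (1) of the definition holds because a losing bidder pays nothing in a first-price auction, so $x_i=0$ forces $\hat p_i=0$. For condition (2), I would exhibit the witness action $\bot$ as the bid of $0$: a participant submitting $0$ either loses (and pays nothing) or wins at price $0$, so in every case $\hat p_i(\bot,\veca_{-i})=0$ regardless of the opponents' actions $\veca_{-i}$. Thus both conditions of strong IR are met.

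With these facts in hand, Theorem~\ref{thm:smoothrosca} applies directly to the no-overbidding Nash equilibria of the first-price rosca and yields PoA at most $(2+\mu_1+\mu_2)/\lambda=(2+1+0)/(1-1/e)=3/((e-1)/e)=3e/(e-1)$, as claimed. I do not anticipate a substantive obstacle here: once strong IR is confirmed and the witness action $\bot$ is pinned down, the bound is pure arithmetic, and the heavy machinery has been discharged upstream in Lemma~\ref{lem:roundrobin} and Theorem~\ref{thm:smoothrosca}.
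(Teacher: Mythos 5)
Your proposal is correct and follows exactly the route the paper intends: instantiate Theorem~\ref{thm:smoothrosca} with the single-item first-price auction, which is $(1-1/e,1,0)$-smooth and strongly IR (losers pay nothing, and bidding $0$ guarantees zero gross payment), yielding $(2+1+0)/(1-1/e)=3e/(e-1)$. The paper leaves these verifications implicit, so your filling in of the strong-IR check and the arithmetic is exactly what is needed.
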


\begin{corollary}
    For quasilinear participants, any Nash equilibrium of the second-price rosca satisfying no-overbidding has PoA at most $3$.
\end{corollary}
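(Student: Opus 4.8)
The plan is to obtain this corollary as an immediate specialization of Theorem~\ref{thm:smoothrosca}, so the bulk of the ``work'' is really just checking that the single-item second-price auction fits the theorem's hypotheses and then plugging in its smoothness constants. Concretely, I would take $M$ in Theorem~\ref{thm:smoothrosca} to be the single-item second-price auction, which \citet{syrgkanis2013composable} show is $(1,0,1)$-smooth; that is, $\lambda=1$, $\mu_1=0$, and $\mu_2=1$. The condition $\lambda\leq 1$ required by the theorem holds with equality.

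The one hypothesis of Theorem~\ref{thm:smoothrosca} not supplied directly by smoothness is that $M$ be strongly individually rational, so I would verify the two conditions in that definition. For condition~(1), in a single-item second-price auction any participant who does not receive the item ($x_i=0$) makes no payment, so $\hat p_i=0$. For condition~(2), taking $\bot$ to be a bid of $0$ (under the standard convention that bids are nonnegative) guarantees zero gross payment against every $\veca_{-i}$: a participant who loses pays nothing, and one who wins with a zero bid faces a second-highest bid of $0$, hence again pays nothing.

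Having matched all hypotheses, I would simply invoke Theorem~\ref{thm:smoothrosca} with $(\lambda,\mu_1,\mu_2)=(1,0,1)$ to conclude that every no-overbidding Nash equilibrium of the second-price rosca has PoA at most $(2+\mu_1+\mu_2)/\lambda=(2+0+1)/1=3$, which is exactly the claimed bound.

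I do not anticipate a genuine obstacle in this proof, precisely because the difficult components are already encapsulated upstream: Lemma~\ref{lem:roundrobin} handles the round-robin composition that lifts single-item smoothness to the rosca, and the proof of Theorem~\ref{thm:smoothrosca} absorbs the subtle accounting for redistributed rebates via the no-overbidding assumption. The only place demanding a little care is condition~(2) of strong IR, where the corner case of an ``opting-out'' participant who nonetheless wins with the bid $\bot$ must still yield zero payment; nonnegativity of bids resolves this. It is also worth recalling, as the excerpt notes, that the no-overbidding assumption is not merely convenient but necessary here, since without it second-price roscas admit equilibria with unbounded inefficiency.
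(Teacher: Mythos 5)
Your proof is correct and follows exactly the paper's intended route: the corollary is an immediate instantiation of Theorem~\ref{thm:smoothrosca} with the single-item second-price auction's $(1,0,1)$-smoothness, giving $(2+0+1)/1=3$. Your explicit verification of strong individual rationality (with $\bot$ the zero bid) is a detail the paper leaves implicit but is handled correctly.
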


\subsection{Relaxing No-Overbidding}
\label{sec:overbidding}

The no-overbidding assumption in the previous section rules out behavior where participants overbid in early rounds to induce others to bid high in later rounds, thereby resulting in high rebates. When this behavior is extreme, participants' payments could conceivably far exceed their values, which in turn complicates the smoothness-based approach. The following example gives a Nash equilibrium of a first-price rosca where overbidding leads to welfare loss.

\begin{example}\label{ex:loss}
    Consider three participants, with $\vecv_1 = (1,0,0)$, $\vecv_2=(2,2,0)$, and $\vecv_3=(2,2,0)$. The following behavioral strategies form a Nash equilibrium. Participant 1 bids 2 in round 1. Participants 2 and 3 bid 1 in round 1. If participant 1 bids less than 2 in round 1, participants 2 and 3 bid 0 in round 2. Otherwise, they bid 2. The optimal welfare is then $4$, but the equilibrium welfare is $3$.\footnote{This example does not satisfy the refinement of subgame perfection, though our welfare guarantees do not need this restriction.}
\end{example}

Despite the loss exhibited in Example~\ref{ex:loss}, we can obtain a constant price of anarchy for first-price roscas without an overbidding assumption. Lemma~\ref{lem:overpay} below shows that overbidding cannot drive payments much higher than equilibrium welfare. The lemma extends the following logic: In equilibrium, the participant who wins in the final round has no competition, and is therefore making zero payments. Consequently, the participant who wins in the second-to-last round cannot expect any rebates from round $n$, and therefore has no incentive to overbid. This, in turn, limits the rebates due the participant who wins the round before that, and so on. These limits on rebates limit the extent of overbidding that might occur. Throughout this section, we index participants such that in round $t$, the winner is participant $t$.

\begin{restatable}{lemma}{lemoverpay}\label{lem:overpay}
    Fix a Nash equilibrium of a first-price rosca. Then:
    $$\hat p_{t}^t\leq v_{t}^t+\tfrac{1}{n-1}\sum_{t'=t+1}^nv_{t'}^{t'}(\tfrac{n}{n-1})^{t'-t-1}.$$
\end{restatable}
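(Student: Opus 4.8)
The plan is to reduce the lemma to a one-step recursive inequality and then solve that recursion by backward induction on $t$. Writing $\hat p_t^t=b_t^t$ for the winning bid in round $t$, I would first establish the recursive bound
\[
\hat p_t^t \le v_t^t + \tfrac{1}{n-1}\sum_{t'=t+1}^n \hat p_{t'}^{t'},
\]
call it $(\star)$, and then show that the stated closed form is exactly the solution obtained by unrolling $(\star)$ into itself.

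To prove $(\star)$ I would invoke the best-response condition for participant $t$, the winner of round $t$, against the single deviation ``match the equilibrium strategy through round $t-1$, then bid $0$ in every round from $t$ onward.'' The key bookkeeping point is that rounds $1,\dots,t-1$ are played identically under the equilibrium and the deviation: opponents' behavioral strategies are fixed, $t$ reproduces its equilibrium bids through round $t-1$, and the monitoring scheme is deterministic, so the realized history and hence $t$'s past rebates $\tfrac{1}{n-1}\sum_{t'<t}\hat p_{t'}^{t'}$ coincide. These terms therefore cancel from both sides of the Nash inequality, leaving only the contribution of rounds $t,\dots,n$. On the equilibrium side this contribution is $v_t^t-\hat p_t^t+\tfrac{1}{n-1}\sum_{t'>t}\hat p_{t'}^{t'}$, since $t$ wins only round $t$ and collects rebate $\hat p_{t'}^{t'}/(n-1)$ in each later round $t'$. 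On the deviation side, $t$ makes zero gross payment in every round $\ge t$ (in a first-price auction the winner pays its own bid, which is $0$, and losers pay nothing), while its value and all rebates are nonnegative; hence the deviation's continuation utility is nonnegative. Best response then gives $v_t^t-\hat p_t^t+\tfrac{1}{n-1}\sum_{t'>t}\hat p_{t'}^{t'}\ge 0$, which rearranges to $(\star)$. Crucially, this sidesteps having to predict how opponents react in rounds $t+1,\dots,n$: I only need that the $0$-bid continuation is available and payment-free, which holds regardless of tie-breaking or opponents' reactions.

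Finally I would run the backward induction. The base case $t=n$ is immediate from $(\star)$, whose future sum is empty, giving $\hat p_n^n\le v_n^n$. For the inductive step, assuming $\hat p_s^s$ is bounded by the claimed closed form $Q_s$ for all $s>t$, I substitute into $(\star)$ and verify the identity $v_t^t+\tfrac{1}{n-1}\sum_{s=t+1}^n Q_s = Q_t$; this is a routine geometric-series computation where collecting the coefficient of each $v_m^m$ yields $\tfrac{1}{n-1}\sum_{k=0}^{m-t-2}(\tfrac{n}{n-1})^k = (\tfrac{n}{n-1})^{m-t-1}-1$, which recovers exactly the closed-form coefficient. I expect the main obstacle to be the first step: carefully justifying the deviation in the sequential, behavioral-strategy setting, in particular that only rounds $\ge t$ survive the Nash comparison and that the $0$-bid continuation is genuinely available and costless. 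Once $(\star)$ is in hand, solving the recursion is mechanical.
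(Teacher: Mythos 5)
Your proof is correct and takes essentially the same route as the paper's: the paper likewise uses the ``bid $0$ from round $t$ onward'' deviation to obtain the recursion $\hat p_t^t \le v_t^t + \tfrac{1}{n-1}\sum_{t'>t}\hat p_{t'}^{t'}$ and then solves it by strong backward induction with the identical geometric-series coefficient computation. Your explicit bookkeeping that rounds $1,\dots,t-1$ (and the rebates earned there) coincide under the deviation and hence cancel from the Nash comparison is a point the paper leaves implicit, but it is not a different argument.
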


We provide the proof in the supplementary materials.

\begin{theorem}\label{thm:firstprice}
    In any Nash equilibrium of the first-price rosca, the PoA is at most $(2e+1)e/(e-1)$.
\end{theorem}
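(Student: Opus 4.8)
The plan is to follow the same two-step template as Theorem~\ref{thm:smoothrosca}, but to replace the no-overbidding assumption with the unconditional payment bound of Lemma~\ref{lem:overpay}. The first step is unchanged: since the first-price rosca is the round-robin composition of the single-item first-price auction, which is $(1-1/e,1,0)$-smooth and strongly IR, Lemma~\ref{lem:roundrobin} gives that the rosca is $(1-1/e,2,0)$-smooth. Applying smoothness at the equilibrium $\veca$ yields
\begin{equation*}
    \sum\nolimits_i u_i^{\vecv_i}(\veca)\geq \left(1-\tfrac1e\right)\textsc{OPT}(\vecv)-2\sum\nolimits_i\hat p_i(\veca).
\end{equation*}
So as before it suffices to upper bound the total gross payments $\sum_i\hat p_i(\veca)$ in terms of the equilibrium welfare $\textsc{Wel}=\sum_i\vecv_i\cdot\vecx_i(\veca)$. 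The new ingredient is that, without no-overbidding, we can no longer bound each $\hat p_i$ by $i$'s value directly; instead we must sum the geometric overbidding bound from Lemma~\ref{lem:overpay}.

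The heart of the argument is the calculation bounding $\sum_t\hat p_t^t$. Using the indexing where participant $t$ wins round $t$, Lemma~\ref{lem:overpay} gives
\begin{equation*}
    \sum\nolimits_{t=1}^n\hat p_t^t\leq \sum\nolimits_{t=1}^n v_t^t+\sum\nolimits_{t=1}^n\tfrac{1}{n-1}\sum\nolimits_{t'=t+1}^n v_{t'}^{t'}\left(\tfrac{n}{n-1}\right)^{t'-t-1}.
\end{equation*}
I would swap the order of summation in the double sum, collecting the coefficient of each $v_{t'}^{t'}$: the term $v_{t'}^{t'}$ is multiplied by $\tfrac{1}{n-1}\sum_{t=1}^{t'-1}(\tfrac{n}{n-1})^{t'-t-1}$, which is a finite geometric series summing to $(\tfrac{n}{n-1})^{t'-1}-1\leq (\tfrac{n}{n-1})^{n-1}-1$. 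Since $(\tfrac{n}{n-1})^{n-1}=(1+\tfrac{1}{n-1})^{n-1}\leq e$, each $v_{t'}^{t'}$ picks up a coefficient at most $1+(e-1)=e$, giving $\sum_t\hat p_t^t\leq e\sum_t v_t^t=e\cdot\textsc{Wel}$, where the last equality holds because $v_t^t=\vecv_t\cdot\vecx_t(\veca)$ is exactly participant $t$'s realized value. Since only winners make positive gross payments, $\sum_i\hat p_i(\veca)=\sum_t\hat p_t^t\leq e\cdot\textsc{Wel}$.

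Finally I would combine the two bounds. Substituting $\sum_i\hat p_i(\veca)\leq e\cdot\textsc{Wel}$ into the smoothness inequality and using $\sum_i u_i^{\vecv_i}(\veca)=\textsc{Wel}$ (payments net to zero across participants under full redistribution, so equilibrium welfare equals realized allocative value) gives
\begin{equation*}
    \textsc{Wel}\geq \left(1-\tfrac1e\right)\textsc{OPT}(\vecv)-2e\cdot\textsc{Wel},
\end{equation*}
and rearranging yields $\textsc{OPT}(\vecv)/\textsc{Wel}\leq (2e+1)e/(e-1)$, as claimed. I expect the main obstacle to be the geometric-sum bookkeeping: getting the coefficient of each $v_{t'}^{t'}$ right after interchanging summation and correctly identifying the uniform bound $(1+\tfrac{1}{n-1})^{n-1}\leq e$ that makes the constant clean. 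A secondary point worth stating carefully is why $\sum_i\hat p_i(\veca)$ reduces to the sum of winners' payments $\sum_t\hat p_t^t$ and why equilibrium welfare equals $\sum_t v_t^t$; both follow from the first-price structure and full redistribution, but should be noted explicitly so the chain of (in)equalities is airtight.
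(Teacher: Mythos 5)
Your proposal is correct and follows essentially the same route as the paper's proof: sum the Lemma~\ref{lem:overpay} bounds, collect the coefficient of each $v_{t'}^{t'}$ to get $(\tfrac{n}{n-1})^{t'-1}\leq e$, conclude $\sum_i\hat p_i(\veca)\leq e\sum_i\vecv_i\cdot\vecx_i(\veca)$, and plug this into the $(1-1/e,2,0)$-smoothness inequality from Lemma~\ref{lem:roundrobin}. The only cosmetic difference is that you interchange the order of summation explicitly where the paper states the collected coefficients directly; the constants and logic match exactly.
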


The result follows from summing the bounds on $\hat p_i(\veca)$ from Lemma~\ref{lem:overpay}, which can be arranged to obtain an upper bound of $e\sum_i \vecv_i\cdot \vecx_i(\veca)$ of the total gross payments. The theorem then follows from applying smoothness as before.

\subsection{Extension to Nonlinear Utilities}

In Appendix~\ref{app:nonlinear}, we extend the price of anarchy results above beyond quasilinear utilities. With arbitrary convex cost for payments $C$, the setting comes to resemble hard budgets, for which the price of anarchy is known to be poor. We parametrize our results by upper ($\beta$) and lower ($\alpha$) bounds on the slope $C'$. We give performance guarantees which scale linearly with the ratio $\beta/\alpha$. For up-front roscas, our bounds are unconditional, while for sequential roscas, we assume an analogous no-overbidding condition to the quasilinear version.

\section{Swap Roscas}
\label{sec:swap}

Several common rosca formats eschew competition between participants in allocating pots. Examples include roscas based on random lottery allocations or those based on seniority or social status \cite{anderson2009enforcement, kovsted1999rotating}. To improve total welfare, it is common practice for participants to engage in an \emph{aftermarket} by buying or selling their assigned allocations when it is mutually beneficial, i.e., by swapping rounds in the rosca \cite{mequanent1996role}.

In this section, we formally define these swap roscas and show that, for participants with quasilinear utilities ($C(p)=p$), this aftermarket is guaranteed to converge to an outcome that yields at least half of the optimal welfare. We then present experimental results showing that this guarantee is often better, even for strictly convex $C$.

\subsection{Theoretical Analysis}

As is common in the literature and in practice, we assume that the aftermarket occurs via a series of two-agent swaps \cite{mequanent1996role, bouman1995rotating, ardener1964comparative}. We assume these swaps can occur at any round $t$. We denote by $\vecp^t$ the vector of payments for round $t$, which are initialized to $0$ for each round and updated as swaps occur. A swap occurs if and only if it is utility-improving for two participants under some set of payments. Formally:
\begin{definition}
Given initial allocation $\vecx$ and payments $\vecp^t$ at round $t$, a {\em swap} is given by a pair of participants $i$, $i'$ assigned to rounds $j,j'\geq t$, respectively, and a payment $\hat p$. A swap is valid if $v_i^{j'}-C(p_i^t+\hat p)>v_i^j-C(p_i^t)$ and $v_{i'}^j-C(p_{i'}^t-\hat p)>v_{i'}^{j'}-C(p_{i'}^t).$

Upon executing a swap, set $x_i^j,x_{i'}^{j'}\leftarrow 0$, $x_i^{j'},x_{i'}^j\leftarrow 1$, $p_i^t\leftarrow p_i^t+\hat p$, and $p_{i'}^t\leftarrow p_i^t-\hat p$.
\end{definition}

Note that with quasilinear participants, all valid swaps must strictly improve allocative efficiency since $C(p)=p$. That is, $v_i^{j'}+v_{i'}^j>v_i^j+v_{i'}^{j'}$, and the validity of a swap does not depend on the initial payments $\vecp^t$. We then study roscas of the following form:

\begin{definition}\label{def:swap}
A {\em swap rosca} starts from an initial allocation $\vecx$ and initial payments $\vecp=\{\vecp^t\}_{t=1}^n$ of $0$ for each participant and round. At each round $t=1,\ldots, n$, participants execute valid swaps and we update the allocation and payment accordingly. We do so until there are no valid swaps.
\end{definition}

Note that for non-linear $C$, new swaps may become valid moving from round $t$ to $t+1$, as each new round's payments reset to $0$. For quasilinear participants, however, Definition~\ref{def:swap} executes all swaps in round $1$. In this case, the resulting allocation is guaranteed to be stable to pairwise swaps.

\begin{definition}
An allocation $\vecx$ is {\em swap-stable} if for all participants $i,i'$ assigned to $j,j'$, we have that
$v_i^j+v_{i'}^{j'}\geq v_i^{j'}+v_{i'}^j$.
\end{definition}

For quasilinear participants, swap-stability is guaranteed regardless of the initial allocation. Convergence of the swap process follows from the fact that the total allocated value $\sum_i\vecv_i\cdot \vecx_i$ strictly increases each swap and that the number of allocations is finite.

\begin{theorem}\label{thm:swap}
For quasilinear participants, the welfare approximation for every swap rosca is at most $2$.
\end{theorem}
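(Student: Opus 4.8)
The plan is to reduce the theorem to a purely combinatorial statement about weighted bipartite matchings and prove it by a charging argument. First I would observe that for quasilinear participants the payments drop out of the welfare entirely: since $\sum_i p_i^t = 0$ in every round, $\textsc{Wel}^{\vecv}(\vecx,\vecp)=\sum_i\vecx_i\cdot\vecv_i$, so only the allocation matters. The preceding text already establishes that the swap process terminates, and since any remaining valid swap would contradict termination, the final allocation is swap-stable. Hence it suffices to show that every swap-stable perfect matching $\sigma$ (assigning participant $i$ to round $\sigma(i)$) collects at least half the weight of the optimal matching $\sigma^*$, i.e. $\sum_i v_i^{\sigma(i)}\geq\tfrac12\sum_i v_i^{\sigma^*(i)}=\tfrac12\textsc{OPT}(\vecv)$. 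Throughout I use nonnegativity of valuations, which holds because $v_i^t$ is the value of receiving a positive pot.

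The core is a charging argument through the permutation $\tau=\sigma^{-1}\circ\sigma^*$ on participants. For each $i$, participant $\tau(i)$ is the one occupying, under the stable matching, the round $\sigma^*(i)$ that $i$ receives optimally; indeed $\sigma(\tau(i))=\sigma^*(i)$. Applying swap-stability to the pair $i,\tau(i)$, which sit at rounds $\sigma(i)$ and $\sigma^*(i)$ respectively, gives
$$v_i^{\sigma(i)}+v_{\tau(i)}^{\sigma^*(i)}\;\geq\; v_i^{\sigma^*(i)}+v_{\tau(i)}^{\sigma(i)}\;\geq\; v_i^{\sigma^*(i)},$$
where the last step discards the nonnegative term $v_{\tau(i)}^{\sigma(i)}$. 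Summing over all participants yields $\sum_i v_i^{\sigma(i)}+\sum_i v_{\tau(i)}^{\sigma^*(i)}\geq\textsc{OPT}(\vecv)$.

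The key step is to recognize that the cross term $\sum_i v_{\tau(i)}^{\sigma^*(i)}$ is itself equal to the stable welfare. Because $\sigma(\tau(i))=\sigma^*(i)$, each summand $v_{\tau(i)}^{\sigma^*(i)}$ is precisely the weight of the stable-matching edge incident to participant $\tau(i)$. Since $\tau$ is a bijection, as $i$ ranges over all participants so does $\tau(i)$, and therefore $\sum_i v_{\tau(i)}^{\sigma^*(i)}=\sum_{i'} v_{i'}^{\sigma(i')}$, which is exactly $\sum_i\vecx_i\cdot\vecv_i$. Substituting back gives $2\sum_i\vecx_i\cdot\vecv_i\geq\textsc{OPT}(\vecv)$, so the ratio $\textsc{OPT}(\vecv)/\textsc{Wel}^{\vecv}(\vecx,\vecp)$ is at most $2$.

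I expect the main obstacle to be the bookkeeping in this final reindexing: one must verify carefully that $\tau=\sigma^{-1}\circ\sigma^*$ is genuinely a bijection, so that the cross sum reindexes cleanly onto the stable welfare rather than double-counting or omitting edges, and that the pair and rounds plugged into each swap-stability inequality are mutually consistent. A secondary subtlety is the nonnegativity of valuations, which is what licenses dropping $v_{\tau(i)}^{\sigma(i)}$; I would flag this assumption explicitly, noting it is automatic in this setting since winning a pot is never harmful.
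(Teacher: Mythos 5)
Your proposal is correct and is essentially the paper's own argument: the paper normalizes $\sigma^*$ to the identity, writes $\pi=\sigma$ and $\pi^{-1}(i)=\tau(i)$, applies the same swap-stability inequality to the pair $(i,\pi^{-1}(i))$, drops the nonnegative term, and reindexes the cross sum via the bijection exactly as you do. The only cosmetic difference is that you make the nonnegativity of values and the bijection bookkeeping explicit, which the paper leaves implicit.
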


\begin{proof}
Without loss of generality, assume that the welfare-optimal allocation assigns each participant $i$ to be allocated the pot in round $i$, so the optimal welfare is $\sum_i v_i^i$. Now let $\pi(i)$ denote the \emph{round} when participant $i$ is allocated the pot in the swap rosca's final allocation, and $\pi^{-1}(i)$ the \emph{participant} allocated the pot in round $i$. Note that $\pi$ and $\pi^{-1}$ are bijections. Furthermore, under quasilinear utilities, all payments between participants are welfare-neutral, and hence the rosca welfare is given by $\sum_iv_i^{\pi(i)}$.

For any participant $i$, note that swap-stability implies $$v_i^{\pi(i)}+v_{\pi^{-1}(i)}^i\geq v_i^i+v_{\pi^{-1}(i)}^{\pi(i)}\geq v_i^i.$$ Summing over all participants $i$, we get $$\sum\nolimits_iv_i^{\pi(i)}+\sum\nolimits_iv_{\pi^{-1}(i)}^i\geq \sum\nolimits_i v_i^i.$$ Since $\pi$ and $\pi^{-1}$ are bijections, both sums on the lefthand side are equal to the rosca welfare, and the righthand side is the optimal welfare, giving us a $2$-approximation.
\end{proof}

Example 1 in the appendix shows that this bound is tight.

\subsection{Experimental Results}
\label{sec:experiment}

The results presented so far partially rationalize the prevalence of auction and swap roscas. However, two limitations prevent a comprehensive view of roscas' allocative efficiency. First, the worst-case nature of our theoretical results give little detail about outcomes in \emph{typical} instances. Second, our results hold only under quasilinear utilities, which may be less realistic for extremely vulnerable participants.

This section complements our theoretical results with computational experiments that shed light on these latter questions for swap roscas. We simulate swap roscas under natural instantiations of participants' values, and with participants' costs for payments taking a well-studied but non-linear form. We find that the approximation ratio of these roscas in more typical scenarios is significantly better than the worst-case ratio, even after relaxing quasilinearity. Our experiments also allow us to study the way rosca performance changes as participants' values for their payments become more convex. In particular, we use
{\em constant relative risk aversion (CRRA)} utilities, given by
\begin{equation*}
    C(p;W,a)=(1-a)^{-1}(W^{1-a}-(W-p)^{1-a}),
\end{equation*}
where the parameter $W$ represents the participant's starting wealth, and $a$ governs the convexity of the function, with $a=0$ being quasilinear. For $a>0$, CRRA utilities have a vertical asymptote at $p=W$, as participants are unable to spend beyond their means. We choose $W$ to be less than many of our participants' maximum values for the rosca pot. This is intended to capture that most participants cannot afford the durable good without the rosca \cite{anderson2002economics}. Note that as $a\rightarrow1$, $C(p;W,a)\rightarrow \ln(W)-\ln(W-p).$ We choose CRRA utilities because they are standard for modeling preferences for wealth in economics \citep[see, e.g.][]{R96}.

We give two sets of experimental results. In each, we run $9$- and $30$-person roscas (typical sizes for small- and medium-sized roscas), and compare three quantities: the optimal welfare under our selected value profile, the expected approximation ratio of a random allocation before any swaps, and the approximation ratio for a swap rosca run from a random allocation. Our swap roscas are simulated according to the description in Section~\ref{sec:swap}. For a pair of participants $i$ and $j$ for whom there exists a valid swap, there are generally many payments which will incentivize a swap and we choose the smallest such payment.

\subsection{Experiment: CRRA Utilites}
\label{sec:expcrra}

Our first experiment fixes a profile of participant values and studies the performance of swap roscas as the convexity parameter $a$ and starting wealth $W$ vary. The value profile, comprised of $9$ participants, features $6$ with {\em cutoff values} of the form $v_i^t = \overline v$ for all $i\leq \hat t$ for some $\hat t$, and three participants with values which are roughly linearly decreasing in time. The average maximum value among cutoff participants is $5$, which matched the average value for linearly decreasing values. We give all value profiles explicitly in the supplement. We consider values of $a$ ranging from $0$ (quasilinear) to $2$ (very convex), focusing on smaller values, as larger values of $a$ tend to represent very similar, extreme functions. We take $W$ in the range $\{1,\ldots, 5\}$, as this puts participants' wealth levels generally below their values for the rosca pot. Welfare values are averaged over $10,000$ simulation runs, each starting with a random initial allocation that participants can pay to improve through swaps. Results for this simulation can be found in Table~\ref{table:crra9}.

\begin{table}[!ht]
\centering
\caption{Swap Rosca Performance Under Different CRRA Parameters (OPT = 45, random baseline ratio = 1.601)}
\label{table:crra9}
\begin{tabular}{cc|c|c|c|c|c}
 & & \multicolumn{5}{c}{$\mathbf{W}$} \\
 & & $\mathbf{1}$ & $\mathbf{2}$ & $\mathbf{3}$& $\mathbf{4}$& $\mathbf{5}$ \\ \cline{2-7}
 \multirow{9}{*}{$\mathbf{a}$} & $\mathbf{0}$ & 1.035 & 1.034 & 1.035 & 1.034 & 1.035 \\ \cline{2-7}
 & $\mathbf{0.1}$ & 1.121 & 1.119 & 1.070 & 1.067 & 1.063 \\ \cline{2-7}
 & $\mathbf{0.2}$ & 1.122 & 1.121 & 1.080 & 1.074 & 1.074 \\ \cline{2-7}
 & $\mathbf{0.3}$ & 1.122 & 1.119 & 1.118 & 1.086 & 1.081 \\ \cline{2-7}
 & $\mathbf{0.5}$ & 1.122 & 1.124 & 1.121 & 1.119 & 1.120 \\ \cline{2-7}
 & $\mathbf{0.75}$ & 1.122 & 1.122 & 1.123 & 1.121 & 1.121 \\ \cline{2-7}
 & $\mathbf{1}$ & 1.124 & 1.122 & 1.121 & 1.123 & 1.122 \\ \cline{2-7}
 & $\mathbf{1.5}$ & 1.123 & 1.123 & 1.122 & 1.122 & 1.123 \\ \cline{2-7}
 & $\mathbf{2}$ & 1.122 & 1.125 & 1.123 & 1.122 & 1.123 \\
\end{tabular}
\end{table}

Across all values of $W$, the approximation ratio of swap roscas generally worsens (increases) as the level of convexity $a$ increases. Intuitively, this is likely due to the fact that since $C$ is convex, a participant receiving payments for a swap values them less than the participant offering the payments. Consequently, swaps are less likely to occur, even if they would lead to improved allocative efficiency. Meanwhile, the effect of $W$ depends on the level of convexity $a$. When $0 < a < 0.5$, participants with higher wealth $W$ have more money to spend on swaps, making swaps more likely to occur and hence improve allocative efficiency.  Thus, approximation ratios improve (decrease) with higher $W$.  However, as convexity increases, the disincentive to swap caused by convexity overcomes the benefit of having greater wealth with which to pay for swaps, and the approximation ratios no longer change with $W$. For all parameter values chosen, however, swap roscas led to a marked improvement over the approximation ratio from random allocation alone, suggesting that even under extreme convexity, participants are able to identify local improvements to social welfare.  We also repeat this experiment with a 30-participant rosca using similar value profiles and observe the same trends. We present the results in the supplement.

\subsection{Experiment: Distributional Diversity}
\label{sec:expneeds}

Our second set of experiments, discussed in more detail in the supplement, varies the distribution of values across the population of participants, again for $9$- and $30$-person roscas. This allows us to study the way the distribution of need across a population impacts rosca welfare. We find that performance is insensitive to wide inequality in values of participants in the population.

\section{Discussion and Conclusion}
\label{sec:conclusion}

Roscas are complex and varied social institutions, significant for their integral role in allocating financial resources worldwide. In this work, we focus specifically on the allocative efficacy of roscas as lending and saving mechanisms. We derive welfare guarantees for roscas under a variety of allocation protocols and show that many commonly-observed roscas provide a constant factor welfare approximation to the optimal allocation. This guarantee, we believe, gives partial explanation for the ubiquity of roscas. In addition to these specific results, our work also serves as proof of concept for the potential for techniques from algorithmic game theory to help us better understand roscas and, more generally, how communities self-organize to create opportunity. We highlight ideas for further exploration below. 

First, our work modeled the savings aspect of roscas, though roscas are also used as insurance when participants experiencing unanticipated needs may bid to obtain the pot earlier than they may have otherwise planned \cite{calomiris1998role,klonner2003rotating,klonner2001roscas}. There remain many gaps in our understanding of roscas when participants' values and incomes evolve stochastically over time.

Another challenge is understanding the tension between allocative efficiency and wealth inequality. Participants with valuable investment opportunities might not bid as aggressively if their low wealth causes them to value cash highly. This is exacerbated when participants experience income shocks, which is often experienced by economically vulnerable individuals \cite{abebe2020subsidy,nokhiz2021precarity}. Ethnographic work shows that altruism plays a significant role in alleviating this tension \cite{klonner2008private, sedai2021friends}. Roscas often serve a dual role of community-building institutions. Consequently, participants tend to observe signals about each others’ shocks, and act with mutual aid in mind \cite{klonner2008private, mequanent1996role}.

Though roscas often work outside formal institutions, studies show that ``rosca enforcement" is not often an issue. For instance, \cite{smets2000roscas,van1997microeconomics} show that early recipients of the pot rarely default, in part due to strong community norms and standards. These considerations often go unaccounted for in theoretical studies of roscas. A deeper understanding of community norms and standards can shed more light on rosca enforcement mechanisms and robustness.  

Finally, there are many questions on how aspects of the population and environment govern the performance of roscas: i.e., under what conditions would one prefer one type of rosca over another? Similarly, how do roscas perform when their members evolve over time, e.g., with some participants joining part way through the rosca and potentially holding more leverage? Likewise, rosca formation is known to be crucial, with many roscas preferring individuals with similar socio-economic backgrounds. Modeling and examining the rosca formation process can improve our understanding of the interaction between the rosca formation process and their functionality, efficacy, and robustness.

\bibliography{references}

\newpage 

\newpage 
\textcolor{white}{hello}
\newpage 
\appendix 

\section{Further Related Works}
\label{app:related}
Roscas are well-documented as both pervasive and effective in promoting positive economic, social, and even health outcomes. Beyond works already mentioned, \citet{raccanello2009health} document the use of roscas to finance healthcare expenditures and build wealth in Mexico. \citet{aredo2004rotating} demonstrates the flexible and varied nature of roscas in Ethiopia.
\citet{pasha2016role} show that ekub are an engine of small business finance in the city of Arba Minch, and that private businesses actually prefer raising money from roscas than from formal financial institutions. \citet{amankwah2019pareto} and \citet{alabi2007role} study roscas in Ghana, \citet{ogujiuba2013challenges} in Nigeria, and \citet{kabuya2015rotating} in Eswatini. \citet{alabi2007role} show from evidence in Ghana that people joined roscas for their perceived efficiency, and that roscas facilitated small-scale business enterprises.

Many studies analyze composition and participation differences across age, ethic, gender, and socioeconomic lines. \citet{adams1992rotating} and \citet{ardener1995money} show that rosca participation is higher among women than men. \citet{anderson2002economics} shows how employed married women in particular in Kenya use roscas to save, protecting earnings from their husbands' more immediate-minded spending. Roscas are known to often include members from similar socio-economic backgrounds \cite{aredo2004rotating}. Nonetheless, \citet{klonner2008private} shows that intragroup diversity is associated with higher rates of bidder altruism and more efficient intra-rosca allocations. 

Economists have also studied the interaction of roscas with formal credit markets. For instance, \citet{besley1994rotating} show that while credit markets are more efficient than roscas, there are situations in which one can expect a higher ex ante expected utility in roscas than formal credit markets. Relatedly, \citet{fang2015rosca} show that in cases where formal credit markets are present but imperfect, roscas and credit markets can complement one another, thereby improving social welfare. 

A different line of work studies roscas as a form of insurance. \citet{klonner2001roscas} develops the first such model of roscas, comparing their performance to risk-sharing contracts.
In this model, roscas can serve as a financial intermediary and benefit risk-averse participants. This work also analyzes the risk-sharing performance of several bidding roscas run simultaneously. The conclusion is that this set-up matches the performance of linear risk-sharing contracts while boasting greater enforceability. Other studies of roscas as insurance include \citet{baland2019now}, and \citet{calomiris1998role}.

This work is most closely related existing analyses of roscas' efficiency. \citet{besley1993economics,besley1994rotating} introduce the first theoretical model of roscas, and a strong focus these and of subsequent studies is on providing comparative welfare guarantees, e.g.\ between different types of roscas, or between roscas and alternative financial institutions. These results typically require strong assumptions, e.g.\ on homogeneity values of either across participants or over time. For example,  \citet{besley1993economics,besley1994rotating} show that both the random and bidding rosca are inefficient, but do not give bounds on this inefficiency. \citet{kovsted1999rotating} analyze differences between random and bidding roscas, again under the assumption that people are saving for a large purchase. They allow for some heterogeneity in people's access to credit, and again provide a comparative welfare analysis between bidding and random roscas. Our work paints a more complete picture by giving quantified bounds on roscas' inefficiency, even in the face of heterogeneity of participants' values across agents and across time.

Our results apply techniques from auction theory and the price of anarchy literature. In particular, we make extensive use of {\em smoothness}, formalized in \citet{roughgarden2009intrinsic} and adapted to auctions in \citet{syrgkanis2013composable}. Smoothness is a sufficient condition for approximately-optimal equilibrium welfare, and is preserved by combination with other smooth mechanisms. In addition, smoothness-derived guarantees generalize beyond standard quasilinear, full-information settings to learning outcomes and Bayes-Nash equilibria, revenue \citet{hartline2014price}, large games \citet{feldman2016price}, risk-averse agents, \citet{kesselheim2018price}, and more. \citet{roughgarden2017price} give an excellent survey. Notably, because of the way payments are redistributed, smoothness is not sufficient for approximately optimal welfare in roscas. To derive our welfare results, we instead combine smoothness with extra analysis to control the impact of redistribution on welfare.

\section{Missing Proofs and Examples}

\subsection{Swap Rosca Example}

The example below shows that the bound from Theorem 1 is tight.

\begin{example}
Suppose we have three participants with value vectors: $\vecv_1=(0,0,0)$, $\vecv_2=(1,0,0)$, and $\vecv_3=(1,1,0)$. The optimal allocation allocates the pot to participant 2 in round 1, participant 3 in round 2, and participant 1 in round 3 for total welfare $\textsc{OPT}(\vecv)=2$. However, the initial allocation that allocates the pot to participant 1 in round 2, participant 2 in round 3, and participant 3 in round 1 is swap-stable and has welfare $1$. 
\end{example}

\subsection{Proof of Lemma \ref{lem:roundrobin}}

\lemroundrobin*

\begin{proof}

Given value profile $\vecv$, an optimal allocation is an assignment from participants $i$ to pots $t^*_i$.
For values $v$ and action profile $\veca$ in the sequential mechanism, we construct a deviation $a^*_i$ for each participant $i$ in the following way: participant $i$ simulates their equilibrium strategy up until round $t_i^* - 1$. Then, in round $t_i^*$, they play their smoothness deviation for $M$ on value profile $\hat \vecv$ and action $a_i^t$ for $i$, where $\hat v_i=v_i^t$ and $0$ for all other participants. They play $\bot$ in all subsequent rounds. Note that, in simulating $a_i$, participant $i$ may win in some round $\hat t_i\leq t^*_i$, in which case the mechanism excludes them from future rounds.

 Let $S_{\alg}$ denote the set of participants whose deviations cause them to win before $t_i^*$ and let $S_{\opt}$ denote the set of participants who do not win before $t^*_i$, hence are able to play their smoothness deviation in round $t^*_i$. For participants in $S_{\opt}$, the choice of smoothness deviation for round $t^*_i$ implies:
 \begin{align*}
      \vecv_i\cdot&\vecx_i(a_i^*,\veca_{-i})-\hat p_i(a_i^*,\veca_{-i}) \\
      &\geq \lambda \cdot v_i^{t_i^*}
      - \mu_1\sum\nolimits_k\hat p_k^{t_i^*}(\veca)-
      \mu_2 \sum\nolimits_kB_k^{t_i^*}(\veca),
 \end{align*}
where for any period $t$, $B_k^{t}(\veca)$ is participant $k$'s bid if they win in round $t$ and $0$ otherwise. 

 For participants in $S_{\alg}$, if $\hat{t}_i$ denotes the round they win in equilibrium, then $$\vecv_i\cdot \vecx_i(a_i^*,\veca_{-i})-\hat p_i(a_i^*,\veca_{-i})=v_i^{\hat t_i}-\hat p_i(\veca)\geq v_i^{t_i^*}-\hat p_i(\veca),$$ where the inequality follows from the fact that values are nonincreasing over time.

Summing over participants, we obtain:
\begin{align*}
    \sum_i&( \vecv_i\cdot\vecx_i(a_i^*,\veca_{-i})-\hat p_i(a_i^*,\veca_{-i}))\\
    &\geq \lambda\textsc{OPT}(\vecv)-\mu_1\sum\limits_{i \in S_{\opt}}\sum_k\hat p_k^{t_i^*}(\veca)-\sum\limits_{i \in S_{\alg}}\hat p_i(\veca)
    \\& -\mu_2\sum\limits_{i \in S_{\opt}}\sum_kB_k^{t_i^*}(\veca).
\end{align*}
The second and third terms on the righthand side of this inequality are each at least the total payments of the sequential mechanism. For the last term, we can write
\begin{equation*}
    \sum\limits_{i \in S_{\opt}}\sum_kB_k^{t_i^*}(\veca)\leq \sum_i\sum_t B_i^t(\veca)
     =\sum_i B_i(\veca),
\end{equation*}
where the first line follows from the fact that participants' bids are nonnegative and the second line from the fact that participants win in at most one round. We therefore obtain 
\begin{align*}
    \sum\nolimits_i&( \vecv_i\cdot\vecx_i(a_i^*,\veca_{-i})-\hat p_i(a_i^*,\veca_{-i}))\\
    &\geq \lambda\textsc{OPT}(\vecv)-(\mu_1+1)\sum\nolimits_i \hat p_i(\veca)-\mu_2\sum\nolimits_iB_i(\veca)
\end{align*}
\noindent giving us the desired inequality.
\end{proof}

The following example demonstrates the necessity of our no-overbidding assumption.
\begin{example}\label{ex:overbid}
Suppose participant $1$ has value $(10,0)$ and participant $2$ has value $(0,0)$. If participant $2$ bids $10$ in round $1$, then participant $1$ cannot win without incurring negative utility. participant $2$ thus wins for free and hence no rebates are distributed. The equilibrium welfare here is $0$, whereas the optimal welfare is $10$. 
\end{example}

\subsection{Proof of Lemma \ref{lem:overpay}}
\lemoverpay*

\begin{proof}
In what follows, we index bidders so that in the optimal allocation, participant $i$ wins in round $i$. 

We suppress dependence of payments and allocations on the profile of equilibrium bids when it is clear from context. We argue by strong induction on the number of rounds remaining. 

As our base case, note that participant $n$ could choose to bid $0$ and incur nonnegative utility from round $n$.\footnote{Note that this is different from their overall utility due to rebates in rounds $1$ through $n-1$. Also note that we can, in fact, say $\hat p_{n}^n=0$, but the weaker argument above yields cleaner indexing without changing the end constant.} Their continuation utility for playing according to their equilibrium strategy is $v_{n}^n-\hat p_{n}^n\geq0$, which yields the desired inequality for the base case.
	
	Now assume that for all rounds from $t+1$ onward, the desired inequality holds. participant $t$ could choose to bid $0$ in all rounds starting with $t$. The continuation utility from this deviation is at least $0$. Meanwhile, in equilibrium, $t$ is receiving $v_{t}^t-\hat p_{t}^t+\sum_{t'=t+1}^n\hat p_{t'}^{t'}/(n-1)\geq 0$ from future rounds. We therefore can upper bound $\hat p_{t}^t$ by
	\begin{align*}
 v_{t}^t+\tfrac{1}{n-1}\sum_{t'=t+1}^n\left [v_{t'}^{t'}+\tfrac{1}{n-1}\sum_{t''=t'+1}^nv_{t''}^{t''}\left(\tfrac{n}{n-1}\right)^{\tau}\right],
	\end{align*}
	
\noindent where $\tau = t''-t'-1$.  Collecting coefficients on $v_{t'}^{t'}$ gives us the following equivalent upper bound:
\begin{align*}
\hat p_{t}^t&\leq v_{t}^t+\tfrac{1}{n-1}\sum_{t'=t+1}^nv_{t'}^{t'}\left(1+\tfrac{1}{n-1}\sum_{j=0}^{t'-t-2}\left(\tfrac{n}{n-1}\right)^j\right)\\
&=v_{t}^t+\tfrac{1}{n-1}\sum_{t'=t+1}^nv_{t'}^{t'}\left(1+\tfrac{1}{n-1}\tfrac{1-(\tfrac{n}{n-1})^{t'-t-1}}{1-\tfrac{n}{n-1}}\right)\\
&=v_{t}^t+\tfrac{1}{n-1}\sum_{t'=t+1}^nv_{t'}^{t'}(\tfrac{n}{n-1})^{t'-t-1}.~~~\qedhere
\end{align*}
\end{proof}

\begin{proof}[Proof of Theorem~\ref{thm:firstprice}]
We first sum the bounds from Lemma~\ref{lem:overpay}:
\begin{align*}
	\sum_i \hat p_i(\veca) &= \sum_{t=1}^n \hat p_{t}^t(\veca) \\
	&\leq \sum_{t=1}^n\left[v_{t}^t+\tfrac{1}{n-1}\sum_{t'=t+1}^nv_{t'}^{t'}(\tfrac{n}{n-1})^{t'-t-1}\right]\\
	&=\sum_{t=1}^n \left(\tfrac{n}{n-1}\right)^{t-1}v_{t}^t\\
	&\leq e\sum_{t=1}^n v_{t}^t=e\sum_i \vecv_i\cdot \vecx_i(\veca).
\end{align*}
By Lemma~\ref{lem:roundrobin}, the first-price rosca is $(1-1/e,2,0)$-smooth before rebates. Considering the same deviation strategy with rebates, we obtain,
\begin{align*}
    \sum\nolimits_i u_i^{\vecv_i}(\veca)&\geq \sum\nolimits_iu_i^{\vecv_i}(a_i^*,\veca_{-i})\\
    &\geq \sum\nolimits_i( \vecv_i\cdot \vecx_i(a_i^*,\veca_{-i})-\hat p_i(a_i^*,\veca_{-i}))\\
    &\geq (1-1/e)\textsc{OPT}(\vecv)-2\sum\nolimits_i\hat p_i(\veca)\\
    &\geq (1-1/e)\textsc{OPT}(\vecv)-2e\sum\nolimits_i\vecv_i\cdot x_i(\veca) \\
\end{align*}
The result follows from noting that both $\sum_i u_i^{\vecv_i}(\veca)$ and $\sum_i \vecv_i\cdot \vecx_i(\veca)$ represent the equilibrium welfare.
\end{proof}

\section{Extension to Nonlinear Utilities}
\label{app:nonlinear}

In this appendix, we show how our auction results for up-front roscas and no-overbidding sequential roscas extend to the more general setting of nonlinear cost for money $C$. We consider the class of convex functions $C$ satisfying $C(0)=0$ and  $C'(x)\in[\alpha,\beta]$ on the range of relevant values for some $0<\alpha\leq\beta$, and we prove bounds which degrade linearly in the ratio $\beta/\alpha$.If $C$ is allowed to be an arbitrary convex function, then it could serve as a hard budget; auctions with hard budgets are known to have unbounded PoA \citep{syrgkanis2013composable,DP14}. Under this regime we can take advantage of two facts which follow immediately from basic calculus:

\begin{lemma}
Let $C$ be convex and increasing and satisfy $C'(x)\in[\alpha,\beta]$ and $C(0)=0$, with $0<\alpha\leq \beta$.

Then:
\begin{enumerate}
	\item For $x\geq 0$, $C(x)\in[\alpha x,\beta x]$.
	\item For $x\leq 0$, $C(x)\in[\beta x,\alpha x]$.
\end{enumerate}
\end{lemma}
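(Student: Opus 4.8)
The plan is to prove both bounds directly from the fundamental theorem of calculus, writing $C(x)$ as the integral of its derivative starting from $0$ and then inserting the pointwise bound $C'(t)\in[\alpha,\beta]$. Since $C$ is convex and increasing with derivative confined to $[\alpha,\beta]$ on the relevant range, it is absolutely continuous there, so $C(x)=C(x)-C(0)=\int_0^x C'(t)\,dt$ is valid. (An equivalent route is the mean value theorem, which gives $C(x)=C'(\xi)\,x$ for some $\xi$ between $0$ and $x$; I find the integral form slightly cleaner for tracking signs, but either works.) Both parts of the lemma then reduce to integrating a constant lower and upper bound on $C'$.

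First I would handle part 1. For $x\geq 0$, integrating $\alpha\leq C'(t)\leq\beta$ over the interval $[0,x]$ immediately yields $\alpha x\leq\int_0^x C'(t)\,dt\leq\beta x$, that is, $C(x)\in[\alpha x,\beta x]$, which is exactly the claim. Next I would turn to part 2. For $x\leq 0$, I would instead write $C(x)=-\int_x^0 C'(t)\,dt$ and integrate the same pointwise bound over $[x,0]$, whose length is $-x\geq 0$. This gives $\alpha(-x)\leq\int_x^0 C'(t)\,dt\leq\beta(-x)$, and negating reverses the inequalities, producing $\beta x\leq C(x)\leq\alpha x$, i.e.\ $C(x)\in[\beta x,\alpha x]$.

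There is no substantive obstacle here—the statement is an immediate consequence of monotone integration of a bounded derivative—so the only point requiring care is the sign bookkeeping in the second case. Multiplying through by the nonpositive quantity $x$ (equivalently, negating after integrating over $[x,0]$) reverses the order of the $\alpha$ and $\beta$ endpoints, which is precisely why the interval swaps from $[\alpha x,\beta x]$ in part 1 to $[\beta x,\alpha x]$ in part 2. The main task in writing the proof is therefore simply to present these two sign cases cleanly and to note upfront the absolute continuity of $C$ that justifies invoking the fundamental theorem of calculus.
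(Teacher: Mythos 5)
Your proof is correct and is exactly the "basic calculus" argument the paper has in mind: the paper states this lemma without proof, asserting it "follows immediately from basic calculus," and your FTC/integration-of-bounds derivation (with the sign reversal for $x\leq 0$) is the standard way to fill that in.
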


As in the quasilinear case, we demonstrate our approach with up-front roscas, then present the more involved analysis of sequential roscas. Both the tradeoff step between utilities and payments and the upper bound on payments need to be adjusted to accommodate nonlinear utilities.

\subsection{Up-Front Roscas}

Under quasilinear utilities, it was unimportant to the analysis whether payments and rebates were made before round $1$ or distributed across time. When utilities are nonlinear, recall that an participant $i$'s disutility for payments is broken into the sum $\sum_t C(p_i^t)$. Hence, the timing of payments and rebates could change payoffs dramatically. Our analysis will be agnostic to these timing considerations. Whatever the timing, we continue to denote by $p_i^t(\vecb),\hat p_i^t(\vecb)$, and $\hat r_i^t(\vecb)$ the net payments, gross payments, and rebates, respectively, of participant $i$ at time $t$ under bid profile $\vecb$. Under any up-front rosca, participant $i$'s total payments are their bid. Hence, $\sum_t\hat p_i^t(\vecb)=b_i$ and $\sum_t\hat r_i^t(\vecb)=\sum_{j\neq i}b_j/(n-1)$.

Trading off payments and utilities becomes more complex. We extend the proof from \citet{syrgkanis2013composable}, parameterizing by a constant $\rho$:

\begin{lemma}\label{lem:ufsmoothconcave}
For any $\rho\in[0,2/\beta]$ and any Nash equilibrium $\vecb$ of an up-front rosca with values $\vecv$:
\begin{equation*}
	\sum\nolimits_i u_i^{\vecv_i}(\vecb)\geq (1-\tfrac{\rho\beta}{2})\textsc{OPT}(\vecv)-\tfrac{1}{\rho}\sum\nolimits_i\hat p_i(\vecb).
\end{equation*}
\end{lemma}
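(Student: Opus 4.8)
The plan is to prove the inequality by a smoothness-style one-shot deviation argument that generalizes the proof of Lemma~\ref{lem:upfront}, with the deviation rescaled by $\rho$ so that the nonlinear cost $C$ can be absorbed. Fix a Nash equilibrium $\vecb$ and let $t_i^*$ be the round that participant $i$ receives in the welfare-optimal assignment, so that $\textsc{OPT}(\vecv)=\sum_i v_i^{t_i^*}$ and $i\mapsto t_i^*$ is a bijection. For each $i$ I would analyze the deviation $b_i^*=\tfrac{\rho}{2}v_i^{t_i^*}$ and lower-bound $u_i^{\vecv_i}(b_i^*,\vecb_{-i})$; the equilibrium condition $u_i^{\vecv_i}(\vecb)\ge u_i^{\vecv_i}(b_i^*,\vecb_{-i})$ then lets me pass to the equilibrium utilities on the left-hand side after summing over $i$.

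First I would reduce the nonlinear utility to a value-minus-gross-payment expression in a way that is agnostic to how payments are spread across rounds. Two facts do this. Because $\hat r_i^t$ depends only on $\vecb_{-i}$, participant $i$'s rebates are unchanged by the deviation; and since $C$ is increasing with $\hat r_i^t\ge 0$, we have $C(p_i^t)=C(\hat p_i^t-\hat r_i^t)\le C(\hat p_i^t)$, so discarding the rebate only weakens the lower bound. Summing the preceding lemma's slope bound $C(x)\le\beta x$ over rounds then gives $\sum_t C(\hat p_i^t)\le\beta\sum_t\hat p_i^t=\beta\,b_i^*$, independent of the timing. Hence $u_i^{\vecv_i}(b_i^*,\vecb_{-i})\ge v_i\cdot\vecx_i(b_i^*,\vecb_{-i})-\tfrac{\rho\beta}{2}v_i^{t_i^*}$, and the hypothesis $\rho\le 2/\beta$ guarantees this deviation cost never exceeds $v_i^{t_i^*}$.

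Next I would run the position-auction case analysis. If the deviation bid places $i$ in a round at least as good as $t_i^*$, then monotonicity of values gives $v_i\cdot\vecx_i(b_i^*,\vecb_{-i})\ge v_i^{t_i^*}$, whence $u_i^{\vecv_i}(b_i^*,\vecb_{-i})\ge(1-\tfrac{\rho\beta}{2})v_i^{t_i^*}$. Otherwise $i$ is displaced, which can only occur when sufficiently many equilibrium bids exceed $b_i^*=\tfrac{\rho}{2}v_i^{t_i^*}$, so the lost value can be charged against those participants' gross payments $\hat p_j(\vecb)$. Carrying out this charging globally, exactly as in the argument underlying Lemma~\ref{lem:upfront} but against the $\rho$-scaled threshold, the total displaced value is at most $\tfrac{1}{\rho}\sum_i\hat p_i(\vecb)$; this aggregate accounting over the bijection $i\mapsto t_i^*$ is what sharpens the naive per-participant factor of $2/\rho$ to $1/\rho$. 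Combining the two cases and applying the Nash inequality yields the claim.

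The main obstacle is the interaction of the nonlinear convex cost with the redistributed rebates and the unspecified timing of payments: unlike the quasilinear case, rebates no longer cancel to make welfare equal allocated value, and $\sum_t C(\hat p_i^t-\hat r_i^t)$ is not additively separable. The crux is therefore the second step, namely showing that the slope sandwich $C'\in[\alpha,\beta]$ together with monotonicity of $C$ lets me replace net-payment cost by $\beta$ times total gross payment uniformly over all timings. Once that decoupling is in place, the combinatorial charging argument proceeds essentially as in the quasilinear setting, and I expect the remaining steps to be routine.
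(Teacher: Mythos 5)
Your setup is sound and matches the paper's in most respects: the one-shot deviation argument, the observation that $i$'s rebates depend only on $\vecb_{-i}$ and may be discarded since $C$ is increasing, and the bound $\sum_t C(\hat p_i^t)\leq \beta\sum_t \hat p_i^t = \beta b_i^*$ uniformly over payment timings. The gap is in the choice of deviation. With the deterministic bid $b_i^*=\tfrac{\rho}{2}v_i^{t_i^*}$, the displacement case only tells you that the equilibrium winner $\hat\imath$ of round $t_i^*$ bids at least $\tfrac{\rho}{2}v_i^{t_i^*}$, i.e.\ $v_i^{t_i^*}\leq\tfrac{2}{\rho}\hat p_{\hat\imath}(\vecb)$. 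Summing over the bijection $i\mapsto t_i^*$ prevents double-counting of the terms $\hat p_{\hat\imath}(\vecb)$ but does nothing to the constant: the aggregate displaced value is bounded by $\tfrac{2}{\rho}\sum_j\hat p_j(\vecb)$, not $\tfrac{1}{\rho}\sum_j\hat p_j(\vecb)$. More generally, a deterministic bid $c\,v_i^{t_i^*}$ yields $\sum_i u_i^{\vecv_i}(\vecb)\geq(1-c\beta)\textsc{OPT}(\vecv)-\tfrac{1}{c}\sum_i\hat p_i(\vecb)$, and no choice of $c$ recovers the coefficients $(1-\tfrac{\rho\beta}{2})$ and $\tfrac{1}{\rho}$ simultaneously. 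Your assertion that the aggregate accounting ``sharpens the naive per-participant factor of $2/\rho$ to $1/\rho$'' is the unsupported step; there is no mechanism in the charging argument that halves the constant.

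The missing idea is a \emph{randomized} deviation, $b_i^*\sim U[0,\rho v_i^{t_i^*}]$, which is what the paper uses (following the Syrgkanis--Tardos first-price smoothness proofs). Against the threshold $b_{\hat\imath}$, the probability of securing a round at least as good as $t_i^*$ is at least $1-b_{\hat\imath}/(\rho v_i^{t_i^*})$, so the expected value obtained is at least $v_i^{t_i^*}-b_{\hat\imath}/\rho$ --- giving the payment coefficient $1/\rho$ --- while the expected disutility from payments is at most $\beta\,\mathbb{E}[b_i^*]=\tfrac{\rho\beta}{2}v_i^{t_i^*}$ --- giving the loss coefficient $\rho\beta/2$. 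The factor-of-two savings comes from the mean of the uniform bid being half its support, not from aggregation across participants. With this one substitution, the rest of your argument (rebate handling, the slope bound, summing over the bijection) goes through essentially verbatim.
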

\begin{proof}
Pick a player $i$, and consider the deviation bid $b_i^*\sim U[0,\rho v_i^{t^*}]$, where $t^*$ denotes round $i$ is allocated in the optimal assignment. Let $\hat i$ denote the participant who wins $t^*$ under $\vecb$. Then we can lower bound the deviation utility $	u_i^{\vecv_i}(b_i^*,\vecb_{-i})$ as:
\begin{align*}
& \int_0^{\rho v_i^{t^*}}\big(v_i^{t^*}x_i^{t^*}(b_i^*,\vecb_{-i})-\sum_tC(p_i^t(b_i^*,\vecb_{-i}))\big) /\rho v_i^{t^*}\,db_i^*\\
&~~~~~~~~\geq \int_0^{\rho v_i^{t^*}}\big(v_i^{t^*}x_i^{t^*}(b_i^*,\vecb_{-i})-C(b_i^*)\big) /\rho v_i^{t^*}\,db_i^*\\
&~~~~~~~~ \geq\int_0^{\rho v_i^{t^*}}\big(v_i^{t^*}x_i^{t^*}(b_i^*,\vecb_{-i})-\beta b_i^*\big) /\rho v_i^{t^*}\,db_i^*\\
&~~~~~~~~= \int_{b_{\hat i}}^{\rho v_i^{t^*}}\tfrac{1}{\rho}\,db_i^*-\tfrac{\beta\rho v_i^{t^*}}{2}\\
&~~~~~~~~= v_i^{t^*}-\tfrac{b_{\hat i}}{\rho}-\tfrac{\beta\rho v_i^{t^*}}{2}
\end{align*}
Summing the deviation utilities across participants yields the desired bound.
\end{proof}
To upper bound the impact of rebates, we use:
\begin{lemma}\label{lem:ufoverbidconcave}
	Let $\vecb$ be a Nash equilibrium of an up-front rosca. Then for any participant $i$, $\hat p_i(\vecb)\leq \vecv_i\cdot\vecx_i(\vecb)/\alpha$.
\end{lemma}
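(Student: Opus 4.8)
The plan is to mirror the quasilinear argument of Lemma~\ref{lem:upfrontoverbid} by deviating participant $i$'s bid to $0$, but to track the cost function $C$ carefully, since with nonlinear utilities the way payments are spread across rounds affects payoffs. Fix a Nash equilibrium $\vecb$ and a participant $i$, and consider the deviation to $b_i=0$. Because rebates in an up-front rosca depend only on the \emph{other} participants' bids, every round's rebate $\hat r_i^t$ is unchanged by this deviation; only $i$'s gross payment changes, dropping from $\hat p_i^t$ to $0$ in each round. Hence the net payment in round $t$ moves from $\hat p_i^t-\hat r_i^t$ to $-\hat r_i^t$. Since every participant is assigned exactly one pot, $i$ still wins some round under the deviation, receiving allocation value at least $0$ by nonnegativity of values.

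First I would write out the Nash condition $u_i^{\vecv_i}(\vecb)\geq u_i^{\vecv_i}(0,\vecb_{-i})$ and discard the (nonnegative) deviation allocation value, obtaining
\begin{equation*}
	\vecv_i\cdot\vecx_i(\vecb)-\sum\nolimits_t C(\hat p_i^t-\hat r_i^t)\;\geq\;-\sum\nolimits_t C(-\hat r_i^t),
\end{equation*}
which rearranges to $\sum\nolimits_t\big(C(\hat p_i^t-\hat r_i^t)-C(-\hat r_i^t)\big)\leq \vecv_i\cdot\vecx_i(\vecb)$.

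The key step is to lower bound the left-hand side termwise at a \emph{common} rebate level, rather than bounding the two cost terms separately. Within round $t$, the equilibrium and deviation arguments differ by exactly the gross payment $\hat p_i^t\geq 0$, so since $C'\geq\alpha$ the mean value theorem gives $C(\hat p_i^t-\hat r_i^t)-C(-\hat r_i^t)\geq\alpha\,\hat p_i^t$. Summing over $t$ yields $\alpha\,\hat p_i(\vecb)\leq\vecv_i\cdot\vecx_i(\vecb)$, and dividing by $\alpha$ gives the claim. The main obstacle is precisely this nonlinearity-and-timing entanglement: unlike the quasilinear case, one cannot simply cancel the rebate contribution out of each round's cost term. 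The resolution is that the identical rebate $\hat r_i^t$ appears in both the equilibrium and the deviation cost for round $t$, so the termwise comparison isolates exactly the marginal disutility of the gross payment, which the slope bound $C'\geq\alpha$ controls. This also keeps the argument agnostic to how payments and rebates are distributed over time, as required by the surrounding analysis.
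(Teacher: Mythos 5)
Your proof is correct and follows essentially the same route as the paper's: deviate participant $i$ to a zero bid, observe that the rebates $\hat r_i^t$ are unaffected because they depend only on the other participants' bids, and use the slope lower bound $C'\geq\alpha$ to turn the resulting Nash inequality into $\alpha\,\hat p_i(\vecb)\leq\vecv_i\cdot\vecx_i(\vecb)$. If anything, your termwise comparison of $C(\hat p_i^t-\hat r_i^t)$ against $C(-\hat r_i^t)$ at the common rebate level is more careful than the paper's version, which decomposes the disutility into $\sum_t C(\hat p_i^t)$ plus a linearly entering rebate term and thereby implicitly treats gross payments and rebates as if they were not netted within a round.
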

\begin{proof}
	Assume for some $i$ that $\alpha \hat p_i(\vecb)> \vecv_i\cdot\vecx_i(\vecb)$. Then $i$ could improve their utility by bidding $0$, which, in an up-front rosca, does not change their rebates: $\hat r_i(0,\vecb_{-i})=\hat r_i(\vecb)> \vecv_i\cdot \vecx_i(\vecb)-\alpha \hat p_i(\vecb)+\hat r_i(\vecb)\geq \vecv_i\cdot \vecx_i(\vecb)-\sum\nolimits_t C(\hat p_i^t)+\hat r_i(\vecb) $.
\end{proof}

Combining Lemmas~\ref{lem:ufsmoothconcave} and \ref{lem:ufoverbidconcave} and choosing $\rho$ as follows:
\begin{equation*}
	\rho = \frac{\beta +\sqrt{\beta(2\alpha+\beta)}}{\beta(\alpha+\beta + \sqrt{\beta(2\alpha+\beta)})}
\end{equation*}
yields the following price of anarchy bound:
\begin{theorem}
	With cost for money $C$ satisfying $C'(x)\in[\alpha,\beta]$ every Nash equilibrium of an up-front rosca has PoA at most:
	\begin{equation*}
		\textsc{PoA}\leq \frac{\alpha+\beta+\sqrt{\beta(2\alpha+\beta)}}{\alpha}.
	\end{equation*}
\end{theorem}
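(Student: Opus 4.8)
The plan is to combine the two preceding lemmas into a single inequality that is self-referential in the equilibrium welfare, solve that inequality to express the approximation ratio as a function of the free parameter $\rho$, and then choose $\rho$ to minimize the resulting expression. Write $W=\sum_i \vecv_i\cdot\vecx_i(\vecb)$ for the value allocated at the equilibrium $\vecb$. Lemma~\ref{lem:ufsmoothconcave} lower-bounds the equilibrium welfare $\sum_i u_i^{\vecv_i}(\vecb)$ by $(1-\tfrac{\rho\beta}{2})\textsc{OPT}(\vecv)-\tfrac1\rho\sum_i\hat p_i(\vecb)$ for every $\rho\in[0,2/\beta]$, while Lemma~\ref{lem:ufoverbidconcave} bounds each gross payment by $\hat p_i(\vecb)\le \vecv_i\cdot\vecx_i(\vecb)/\alpha$, so that $\sum_i\hat p_i(\vecb)\le W/\alpha$. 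Substituting the payment bound into the smoothness bound produces a single inequality whose right-hand side contains the term $-\tfrac{1}{\rho\alpha}W$.

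The one conceptual point requiring care in the combination is that Lemma~\ref{lem:ufsmoothconcave} lower-bounds the true welfare $\sum_i u_i^{\vecv_i}(\vecb)$, which includes the payment disutilities $\sum_t C(p_i^t)$, whereas the payment bound is phrased in terms of the allocated value $W$. I would reconcile the two by noting that, since all payments are redistributed ($\sum_i p_i^t=0$ each round) and $C$ is convex with $C(0)=0$, Jensen's inequality gives $\sum_i C(p_i^t)\ge 0$ in every round; hence $W$ dominates the equilibrium welfare from above, and chaining $W\ge\sum_i u_i^{\vecv_i}(\vecb)\ge(\cdots)$ lets me place $W$ on the left and obtain the self-referential inequality
\[
W\left(1+\tfrac{1}{\rho\alpha}\right)\ \ge\ \left(1-\tfrac{\rho\beta}{2}\right)\textsc{OPT}(\vecv).
\]
Rearranging isolates the ratio, $\textsc{OPT}(\vecv)/W\le \bigl(1+\tfrac1{\rho\alpha}\bigr)\big/\bigl(1-\tfrac{\rho\beta}{2}\bigr)$, valid for all $\rho\in(0,2/\beta)$.

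The main work is then the optimization over $\rho$, which is where the stated closed form has to emerge and which I expect to be the most error-prone step. Minimizing $g(\rho)=\bigl(1+\tfrac1{\rho\alpha}\bigr)\big/\bigl(1-\tfrac{\rho\beta}{2}\bigr)$ by setting $g'(\rho)=0$ clears denominators to the quadratic $\tfrac{\alpha\beta}{2}\rho^2+\beta\rho-1=0$, whose positive root is $\rho=\bigl(\sqrt{\beta(2\alpha+\beta)}-\beta\bigr)/(\alpha\beta)$; a short algebraic manipulation (cross-multiplying and using $\beta(2\alpha+\beta)=\bigl(\sqrt{\beta(2\alpha+\beta)}\bigr)^2$) shows this coincides with the $\rho$ written just before the theorem. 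I would then confirm the root lies in the admissible interval $(0,2/\beta]$ — positivity since $\sqrt{\beta(2\alpha+\beta)}>\beta$, and the upper bound since $\sqrt{\beta(2\alpha+\beta)}\le 2\alpha+\beta$ — and substitute it back, again reducing $\beta(2\alpha+\beta)$ to $s^2$ with $s=\sqrt{\beta(2\alpha+\beta)}$ to simplify, obtaining the claimed ratio $(\alpha+\beta+\sqrt{\beta(2\alpha+\beta)})/\alpha$. As a sanity check I would verify the symmetric case $\alpha=\beta$, where the bound should collapse to $2+\sqrt{3}$, and the large-$\beta/\alpha$ regime, where it should grow linearly in $\beta/\alpha$ as promised.
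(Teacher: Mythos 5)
Your overall architecture is the right one and matches the paper's intent: feed the payment bound of Lemma~\ref{lem:ufoverbidconcave} into the $\rho$-parametrized smoothness bound of Lemma~\ref{lem:ufsmoothconcave}, close the resulting self-referential inequality, and optimize over $\rho$. Your algebra in the optimization step is also correct: the stationarity condition is indeed $\tfrac{\alpha\beta}{2}\rho^2+\beta\rho-1=0$, its positive root $(\sqrt{\beta(2\alpha+\beta)}-\beta)/(\alpha\beta)$ equals $2/(s+\beta)$ with $s=\sqrt{\beta(2\alpha+\beta)}$, which is the $\rho$ displayed before the theorem, it lies in $(0,2/\beta]$, and substituting it back yields $(\alpha+\beta+s)/\alpha$.

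The gap is in the step where you "reconcile" allocated value with welfare. You set $W=\sum_i\vecv_i\cdot\vecx_i(\vecb)$, observe via Jensen that $W\geq\sum_i u_i^{\vecv_i}(\vecb)$, and chain $W\geq\sum_i u_i^{\vecv_i}(\vecb)\geq(1-\tfrac{\rho\beta}{2})\textsc{OPT}(\vecv)-\tfrac{1}{\rho\alpha}W$ to conclude $\textsc{OPT}(\vecv)/W\leq(1+\tfrac{1}{\rho\alpha})/(1-\tfrac{\rho\beta}{2})$. But the price of anarchy is $\textsc{OPT}(\vecv)/\sum_i u_i^{\vecv_i}(\vecb)$, and you have replaced the welfare by the \emph{larger} quantity $W$ on the left of the chain; since $\sum_i u_i^{\vecv_i}(\vecb)\leq W$, an upper bound on $\textsc{OPT}/W$ says nothing about $\textsc{OPT}/\textsc{Wel}$. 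In the quasilinear case the two coincide (redistributed payments are welfare-neutral), which is why this move is invisible in Theorem~\ref{thm:upfront}; but for strictly convex $C$ the whole point is that redistribution burns welfare, $\sum_{i,t}C(p_i^t)>0$, and your argument never controls that loss. The repair is to bound gross payments by welfare rather than by allocated value, i.e., establish the up-front analogue of Lemma~\ref{lem:pversusu}: the deviation-to-zero argument inside the proof of Lemma~\ref{lem:ufoverbidconcave} actually shows the stronger fact $\sum_t C(\hat p_i^t)\leq\vecv_i\cdot\vecx_i(\vecb)$ in equilibrium, and combining this with $-\sum_t C(-\hat r_i^t)\geq\alpha\,\hat r_i$ and $\sum_i\hat p_i=\sum_i\hat r_i$ gives $\sum_i\hat p_i(\vecb)\leq\alpha^{-1}\sum_i u_i^{\vecv_i}(\vecb)$. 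Substituting that into Lemma~\ref{lem:ufsmoothconcave} yields exactly your self-referential inequality but with $\sum_i u_i^{\vecv_i}(\vecb)$ in place of $W$, after which your optimization over $\rho$ goes through verbatim and delivers the stated constant.
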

Note that the bound degrades linearly with the ratio $\beta/\alpha$, as promised. Moreover, taking $\alpha=\beta = 1$ slightly improves on the bound from Theorem~\ref{thm:upfront}; several of the other constants in the paper can be improved through similar optimization of the bid deviations. We eschewed such optimization in favor of readability.

\subsection{Sequential Roscas}

We take a smoothness-based approach to sequential roscas. We will use the following generalization, adapted to non-linear utilities:
\begin{definition}
	Let $M$ be a sequential single-bid auction. We say $M$ is $(\lambda,\mu_1,\mu_2)$-smooth if for every value profile $\vecv$ and action profile $\veca$, there exists a randomized action $a_i^*(a_i,\vecv)$ for each $i$ such that:
	\begin{align*}
		&\sum\nolimits_iu_i^{\vecv_i}(a_i^*(a_i,\vecv),\veca_{-i}) \\
		&\,\,\,\geq  \lambda \textsc{OPT}(\vecv)-\mu_1\sum\nolimits_i \hat p_i(\veca)-\mu_2\sum\nolimits_i C(B_i(\veca)),
	\end{align*}
	where $B_i(\veca)$ is $i$'s bid in the round where they win, or $0$ if no such round exists.
\end{definition}

First-and second-price auctions are smooth, via extensions of the arguments from \citet{syrgkanis2013composable}.

\begin{lemma}
	The single-round second-price auction is $(1,0,1)$-smooth.
\end{lemma}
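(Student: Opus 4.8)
The plan is to reprove the classical single-item second-price smoothness bound of \citet{syrgkanis2013composable}, but with the disutility $C$ in the role previously played by money. Since $M$ allocates a single item, $\textsc{OPT}(\vecv)$ is just the largest value $v_{i^*}$, where $i^*$ is the participant served by the optimal assignment, and the benchmark term collapses to $\sum_i C(B_i(\veca)) = C(b_w)$, where $w$ is the winner under $\veca$ with winning bid $b_w$ (all other participants contribute $B_i(\veca) = 0$). So the goal reduces to producing, for the fixed profile $\veca$, deviations whose utilities sum to at least $v_{i^*} - C(b_w)$.

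The deviations I would use are as follows. Every participant other than $i^*$ deviates to $\bot$, i.e.\ bids $0$; in a second-price auction such a bid never yields negative utility (one either loses and pays nothing, or wins at price $0$), so these terms are nonnegative and can be dropped. For $i^*$ I would pick a deviation that bids a large constant $\Lambda$, chosen as a function of $\vecv$ so that $C(\Lambda) \geq v_{i^*}$; such $\Lambda$ exists because $C$ is increasing and unbounded (its slope is bounded below). Crucially, $\Lambda$ depends only on $\vecv$, as the smoothness definition requires.

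It remains to lower-bound $i^*$'s deviation utility by $v_{i^*} - C(b_w)$. Let $q = \max_{j \neq i^*} b_j$ be the highest competing bid, and note $q \leq b_w$. If $\Lambda \geq q$, then $i^*$ wins and pays the second price $q$, so their utility is $v_{i^*} - C(q) \geq v_{i^*} - C(b_w)$ by monotonicity of $C$. If instead $\Lambda < q$, then $i^*$ loses with utility $0$, but now $b_w \geq q > \Lambda$ gives $C(b_w) \geq C(\Lambda) \geq v_{i^*}$, so $0 \geq v_{i^*} - C(b_w)$ as well. In either case the bound holds, and summing over all participants yields $\sum_i u_i^{\vecv_i}(a_i^*, \veca_{-i}) \geq v_{i^*} - C(b_w) = \textsc{OPT}(\vecv) - \sum_i C(B_i(\veca))$, which is exactly $(1,0,1)$-smoothness.

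The main obstacle is the choice of $i^*$'s deviation, and this is the one place the nonlinear proof genuinely departs from the quasilinear template. In the quasilinear case the optimal winner simply bids their value $v_{i^*}$: when they are outbid, the blocking bid exceeds $v_{i^*}$, and since disutility equals the bid, the benchmark $C(b_w) = b_w$ already dominates $v_{i^*}$. For general $C$ that can be much flatter than the identity, bidding $v_{i^*}$ fails --- losing could leave a positive residual $v_{i^*} - C(b_w) > 0$ that the zero deviation utility cannot cover. Bidding a sufficiently large $\Lambda$ repairs this by guaranteeing that any rival bid capable of beating $i^*$ already carries disutility at least $v_{i^*}$. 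I would therefore take care to verify only that $\Lambda$ may be chosen depending on $\vecv$ alone and that $C$'s unboundedness makes the defining inequality $C(\Lambda) \geq v_{i^*}$ achievable; the rest is the monotonicity bookkeeping above.
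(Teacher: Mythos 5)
Your proof is correct and matches the paper's argument: the paper also has the highest-value participant deviate to the bid $C^{-1}(v_h)$ (your $\Lambda$ with $C(\Lambda)=v_h$) while everyone else bids $0$, and runs the same two-case analysis on whether that deviation wins or loses. The only cosmetic difference is that you allow any $\Lambda$ with $C(\Lambda)\geq v_{i^*}$ rather than exact equality, which changes nothing.
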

\begin{proof}
	Let $h$ be the index of the participant with the highest value. Have this bidder bid $C^{-1}(v_h)$, while the remaining participants bid $0$. Let $i^*$ denote the index of the highest bidder in equilibrium, and $\hat i$ the index of the highest bidder other than $h$. (We may have $i^*=\hat i$.) If $h$ wins bidding $C^{-1}(v_h)$,
	their utility is $v_h-C(a_{\hat i})=\text{OPT}-C(a_{\hat i})\geq \text{OPT}-C(a_{i^*})$. Otherwise, they lose and earn utility $0$, while $a_{\hat i}\geq C^{-1}(v_h)$. It follows that $\text{OPT}-C(a_{i^*})=v_h-C(a_{i^*})\leq v_h-C(a_{\hat i})\leq v_h-C(C^{-1}(v_h))= 0$. This implies the desired smoothness bound.
\end{proof}

\begin{lemma}
	The single-item first-price auction is $((1-1/e^\beta)\beta^{-1},1,0)$-smooth.
\end{lemma}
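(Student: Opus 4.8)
The plan is to adapt the Syrgkanis--Tardos smoothness proof for the single-item first-price auction to the nonlinear-cost setting, where the only new feature is that winning with bid $b$ now yields utility $v_i-C(b)$ rather than $v_i-b$. Since a single item is always allocated to the highest-value participant in the welfare optimum, I write $h$ for the highest-value bidder and note $\textsc{OPT}(\vecv)=v_h$. As in the quasilinear case, only $h$ plays a nontrivial smoothness deviation; every other participant is assigned the deviation ``bid $0$,'' which in a first-price auction guarantees nonnegative utility (each such $i$ either loses and earns $0$, or wins paying $0$ and earns $v_i-C(0)=v_i\geq 0$). Hence $\sum_{i\neq h}u_i^{\vecv_i}(a_i^*,\veca_{-i})\geq 0$, and it suffices to lower-bound $h$'s expected deviation utility.

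For $h$, I take the randomized bid $b^*$ with density $f(b)=\tfrac{1}{v_h-\beta b}$ supported on $[0,U]$, where $U=\tfrac{1-e^{-\beta}}{\beta}v_h$. The first step is to verify normalization: $\int_0^U \tfrac{1}{v_h-\beta b}\,db=\tfrac{1}{\beta}\ln\tfrac{v_h}{v_h-\beta U}=\tfrac{1}{\beta}\ln e^{\beta}=1$, using $v_h-\beta U=v_h e^{-\beta}$; this is precisely why $U$ is set at this value. Note also $U=\lambda v_h$ for $\lambda=(1-1/e^\beta)/\beta$, the target smoothness constant.

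The heart of the argument is bounding $\mathbb{E}\!\left[u_h^{\vecv_h}(b^*,\veca_{-h})\right]$. Writing $B_{-h}=\max_{j\neq h}a_j$ for the highest competing bid, $h$ wins exactly when $b^*>B_{-h}$, in which case it earns $v_h-C(b^*)\geq v_h-\beta b^*$ by item~1 of the preceding lemma ($C(b)\leq\beta b$ for $b\geq 0$). The density is designed so that $(v_h-\beta b)f(b)\equiv 1$, collapsing the integrand to a constant over the winning region. When $B_{-h}\leq U$ this gives $\mathbb{E}[u_h]\geq\int_{B_{-h}}^U 1\,db=U-B_{-h}$; when $B_{-h}>U$, player $h$ never wins, so $\mathbb{E}[u_h]=0$. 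In either case, since the total gross payment $\sum_i\hat p_i(\veca)$ equals the winning bid $B=\max_i a_i\geq B_{-h}$, I obtain $\mathbb{E}[u_h]\geq U-B=\lambda\,\textsc{OPT}(\vecv)-\sum_i\hat p_i(\veca)$ (the second case because then $U-B<0$). Summing with the nonnegative contributions of the other participants yields the $(\lambda,1,0)$-smoothness inequality, with $\mu_2=0$.

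The main obstacle---really the single design choice that makes everything click---is selecting the deviation distribution so that $(v_h-\beta b)f(b)$ is identically $1$ while simultaneously forcing the support endpoint $U$ to equal $\lambda v_h$. The normalization constraint pins down exactly the value $U=(1-e^{-\beta})v_h/\beta$, and this is what produces the claimed constant $\lambda=(1-1/e^\beta)/\beta$ and lets the winning-region integral match $U-B_{-h}$ with unit coefficient. Everything else---the case split on $B_{-h}$ versus $U$ and the use of $B\geq B_{-h}$---is routine once this distribution is in hand.
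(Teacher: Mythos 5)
Your proof is correct and follows essentially the same route as the paper's: the same randomized deviation with density $f(b)=1/(v_h-\beta b)$ on $[0,(1-1/e^{\beta})v_h/\beta]$, the same use of $C(b)\leq\beta b$ to collapse the integrand to a constant, and the same observation that all other bidders secure nonnegative utility by bidding $0$. Your explicit handling of the case $\max_{i\neq h}a_i>U$ and the normalization check are minor points of added care that the paper leaves implicit.
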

\begin{proof}
	The highest valued participant, say index $h$, can deviate to submitting a randomized bid $a_h^*$ drawn from the distribution with density function $f(x) = \tfrac{1}{v_h - \beta \cdot x}$ and support $[0, (1 - 1/e^{\beta} )v_h/\beta]$. The utility of the highest participant from this deviation is:
	\begin{align*}
		u^{v_h}_{h}(a_h^*,\veca_{-h}) &= \int_{\max_{i \neq h}a_i}^{\left(1 - \frac{1}{e^{\beta}}\right)\frac{v_h}{\beta}}(v_h - C(x))f(x) dx\\
		&\geq \int_{\max_{i \neq h}a_i}^{\left(1 - \frac{1}{e^{\beta}}\right)\frac{v_h}{\beta}}(v_h - \beta \cdot x)f(x) dx\\
		&= \left(1 - \frac{1}{e^{\beta}}\right)\frac{v_h}{\beta} - \max_i a_i.
	\end{align*}
	Since all other bidders can get utility at least $0$ by e.g. bidding $0$, the stated smoothness bound holds.
\end{proof}

Under the generalized definition of smoothness, the round-robin composition result, Lemma~\ref{lem:roundrobin} holds without modification. Hence, we obtain:

\begin{corollary}\label{cor:smoothconvex}
	First-price roscas are $((1-1/e^\beta)\beta^{-1},2,0)$-smooth, and second-price roscas are $(1,1,1)$-smooth.
\end{corollary}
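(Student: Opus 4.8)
The plan is to obtain this corollary as an immediate consequence of the round-robin composition lemma (Lemma~\ref{lem:roundrobin}), invoked under the nonlinear definition of smoothness and applied to the single-item bounds established in the two immediately preceding lemmas. Those lemmas show the single-round second-price auction is $(1,0,1)$-smooth and the single-item first-price auction is $((1-1/e^\beta)\beta^{-1},1,0)$-smooth, so it suffices to check that each single-item mechanism satisfies the hypotheses of Lemma~\ref{lem:roundrobin} and then read off the composed parameters, which raise $\mu_1$ by one.

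First I would verify strong individual rationality for both auctions. In each, only the winner ever makes a positive gross payment, so a losing participant has $\hat p_i = 0$, giving condition (1). For condition (2) I take $\bot$ to be the bid $0$: in the first-price auction a participant bidding $0$ pays their bid, namely $0$, whether or not they win; in the second-price auction a participant bidding $0$ either loses and pays nothing, or wins only when all others also bid $0$, in which case the second-highest bid, and hence the payment, is $0$. Thus both mechanisms are strongly IR. Next I would confirm the numerical hypotheses $\lambda\le 1$ and $\mu_1,\mu_2\ge 0$. For the second-price auction $(\lambda,\mu_1,\mu_2)=(1,0,1)$, so these hold trivially. For the first-price auction $\lambda = (1-1/e^\beta)/\beta$, and $\lambda\le 1$ is equivalent to $1-1/e^\beta\le\beta$, i.e.\ $e^{-\beta}\ge 1-\beta$, which is the standard inequality $e^x\ge 1+x$ at $x=-\beta$; the parameters $\mu_1=1$ and $\mu_2=0$ are nonnegative.

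Since the proof of Lemma~\ref{lem:roundrobin} uses only the monotonicity of values in time and the existence of the null action $\bot$, neither of which references linearity of $C$, it applies verbatim here. Applying it then yields the second-price rosca as $(1,0+1,1)=(1,1,1)$-smooth and the first-price rosca as $((1-1/e^\beta)\beta^{-1},1+1,0)=((1-1/e^\beta)\beta^{-1},2,0)$-smooth, as claimed. The only real obstacle is confirming these hypotheses rather than the composition itself: specifically, checking strong IR via the bid-zero deviation and the bound $\lambda\le 1$ for the first-price auction, which rests on the elementary estimate $e^{-\beta}\ge 1-\beta$. Everything else is bookkeeping.
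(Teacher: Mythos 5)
Your proposal is correct and follows essentially the same route as the paper: the paper derives this corollary in one line by asserting that the round-robin composition result (Lemma~\ref{lem:roundrobin}) holds without modification under the generalized smoothness definition and applying it to the two preceding single-item lemmas, exactly as you do. Your additional checks (strong IR of both auctions via the zero bid, and $\lambda\le 1$ for the first-price auction via $e^{-\beta}\ge 1-\beta$) are details the paper leaves implicit, and they are verified correctly.
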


Under non-linear utilities, the appropriate notion of overbidding states that no participant bids in a way that could yield negative utility. This upperbounds their bids by an amount that depends on $C$:

\begin{definition}\label{def:nobconvex}
	Action profile $\veca$ satisfies {\em no-overbidding} if $C(B_i(\veca))\leq \vecv_i\cdot \vecx_i(\veca)$ for every participant $i$.
\end{definition}

Under no-overbidding, values and payments cannot exceed utilities by too much, as the following two Lemmas state. 
\begin{lemma}\label{lem:vversusu}
	In any no-overbidding profile $\veca$ of sequential rosca based on a strongly IR single-item auction: $$\sum_i\vecv_i\cdot\vecx_i(\veca)\leq \tfrac{\beta}{\alpha}\sum_iu_i^{\vecv_i}(\veca).$$
\end{lemma}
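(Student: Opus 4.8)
The plan is to write total utility as value minus a single \emph{distortion} term that captures the net disutility of all payments, and then bound that distortion by interpolating between two complementary estimates. Index participants so that $i$ wins in round $\tau_i$. In that round $i$ makes a gross payment $\hat p_i^{\tau_i}$ and receives no rebate, while in every other round $i$ is a non-winner and collects a rebate $\hat r_i^t\geq 0$, so its net payment there is $p_i^t=-\hat r_i^t\leq 0$. Summing the per-participant utilities gives
\begin{equation*}
\sum\nolimits_i u_i^{\vecv_i}(\veca)=V-C_{\mathrm{win}}+B_{\mathrm{reb}},
\end{equation*}
where $V=\sum_i\vecv_i\cdot\vecx_i(\veca)$, the total cost borne by winners is $C_{\mathrm{win}}=\sum_iC(\hat p_i^{\tau_i})$, and the total rebate benefit is $B_{\mathrm{reb}}=-\sum_i\sum_{t\neq\tau_i}C(-\hat r_i^t)\geq 0$. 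Writing $P=\sum_t\hat p^t=\sum_i\hat p_i^{\tau_i}$ for the total gross payments and $D=C_{\mathrm{win}}-B_{\mathrm{reb}}$, the goal reduces to bounding $D$, since $\sum_i u_i^{\vecv_i}=V-D$.

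Next I would bound the two pieces of $D$. Using the calculus bound $C(x)\leq\alpha x$ for $x\leq 0$ together with the fact that the total rebates distributed equal $P$, the rebate benefit satisfies $B_{\mathrm{reb}}\geq\alpha P$. For the winner costs I would derive \emph{two} separate bounds. From $C(x)\leq\beta x$ for $x\geq 0$ we get $C_{\mathrm{win}}\leq\beta P$. From no-overbidding (Definition~\ref{def:nobconvex}), since a winner's gross payment never exceeds their winning bid in a first- or second-price rosca, $C(\hat p_i^{\tau_i})\leq C(B_i)\leq\vecv_i\cdot\vecx_i(\veca)$, and hence $C_{\mathrm{win}}\leq V$. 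Combining with the rebate bound yields the two estimates $D\leq(\beta-\alpha)P$ and $D\leq V-\alpha P$.

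The crux is that $D$ obeys both estimates simultaneously, so $D\leq\min\big((\beta-\alpha)P,\,V-\alpha P\big)$, and I would eliminate the unknown $P$ by taking the convex combination of the two estimates with weights $\alpha/\beta$ and $(\beta-\alpha)/\beta$. These weights are nonnegative, sum to one, and are chosen so that the $P$-terms cancel, leaving $D\leq\tfrac{\beta-\alpha}{\beta}V$ independent of $P$. Substituting into $\sum_i u_i^{\vecv_i}=V-D$ gives $\sum_i u_i^{\vecv_i}\geq\tfrac{\alpha}{\beta}V$, which rearranges to the claimed $V\leq\tfrac{\beta}{\alpha}\sum_i u_i^{\vecv_i}$.

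The main obstacle is precisely this interpolation step. The naive route---bounding $D\leq(\beta-\alpha)P$ and then using no-overbidding to bound $P\leq V/\alpha$---chains the weakest forms of both bounds and yields only $\tfrac{2\alpha-\beta}{\alpha}$, which is worse than the target and even vacuous once $\beta>2\alpha$. One must instead keep the linear estimate $(\beta-\alpha)P$ (tight when payments are small) alongside the no-overbidding estimate $V-\alpha P$ (tight when payments are large) and take the worst case over $P$. The only other point requiring care is the observation that the winner's gross payment is at most their winning bid, so that no-overbidding actually controls $C_{\mathrm{win}}$ rather than merely the bids.
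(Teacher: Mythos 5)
Your proof is correct and takes essentially the same route as the paper's: the same decomposition of total utility into value, winners' payment disutility, and rebate utility (valid because in each round a participant either pays or receives a rebate, never both), and the same three estimates --- $C$ of gross payments at most $\beta$ times the total transfer, rebate utility at least $\alpha$ times the total transfer, and no-overbidding bounding the winners' disutility by $\sum_i \vecv_i\cdot\vecx_i(\veca)$. The paper packages the final step as the ratio chain $\sum_i \vecv_i\cdot\vecx_i(\veca)/\sum_i u_i^{\vecv_i}(\veca)\leq \sum_{i,t}C(\hat p_i^t)/(-\sum_{i,t}C(-\hat r_i^t))\leq \beta/\alpha$ rather than your explicit convex combination eliminating the total transfer, but the two calculations are equivalent.
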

\begin{proof}
	By definition, $\sum_iu_i^{\vecv_i}(\veca)=\sum_i\vecv_i\cdot\vecx_i(\veca)-\sum_i\sum_tC(p_i^t(\veca))$. The last term can be further decomposed as $\sum_i\sum_tC(p_i^t(\veca))=\sum_i\sum_tC(\hat p_i^t(\veca))+\sum_i\sum_tC(-\hat r_i^t(\veca))$. By the no-overbidding assumption, $\sum_i\vecv_i\cdot\vecx_i(\veca)\geq \sum_i\sum_tC(\hat p_i^t(\veca))$. Hence, the the ratio of values to utilities is upper bounded by the ratio of disutility from payments to utility from rebates. That is:
	\begin{equation*}
		\frac{\sum_i\vecv_i\cdot\vecx_i(\veca)}{\sum_iu_i^{\vecv_i}(\veca)}\leq\frac{\sum_i\sum_tC(\hat p_i^t(\veca))}{-\sum_i\sum_tC(-\hat r_i^t(\veca))}. 
	\end{equation*}
	The lemma then follows from noting:
	\begin{align*}
		\sum_i\sum_tC(\hat p_i^t(\veca))&\leq \beta\sum_i\sum_t\hat p_i^t(\veca)\\
		-\sum_i\sum_tC(-\hat r_i^t(\veca))&\geq \alpha\sum_i\sum_t\hat p_i^t(\veca).~~~\qedhere
	\end{align*}
\end{proof}

\begin{lemma}\label{lem:pversusu}
	In any no-overbidding profile $\veca$ of sequential rosca based on a strongly IR single-item auction: $\sum_i \hat p_i(\veca)=\sum_i \hat r_i(\veca)\leq \alpha^{-1}\sum_i u_i^{\vecv_i}(\veca)$.
\end{lemma}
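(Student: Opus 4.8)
The plan is to follow the template of the proof of Lemma~\ref{lem:vversusu}, but to isolate the \emph{rebate} contribution to welfare and to bound it from below using the lower slope bound on $C$. The first step I would take is to record the accounting identity $\sum_i \hat p_i(\veca) = \sum_i \hat r_i(\veca)$. This is immediate from the assumption that all gross payments are redistributed: in every round $t$ the net payments satisfy $\sum_i p_i^t = 0$, so $\sum_i \hat p_i^t(\veca) = \sum_i \hat r_i^t(\veca)$, and summing over $t$ gives the claim. This identity is what lets me convert a bound on total rebates into the desired bound on total gross payments.

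Next I would expand the welfare. By definition $\sum_i u_i^{\vecv_i}(\veca) = \sum_i \vecv_i \cdot \vecx_i(\veca) - \sum_i \sum_t C(p_i^t(\veca))$. Because the underlying single-item auction is strongly IR, in each round only the round's winner makes a positive gross payment while every non-winner receives a rebate; hence no participant simultaneously pays and is rebated within a round, and using $C(0)=0$ the disutility splits cleanly as $\sum_i \sum_t C(p_i^t) = \sum_i \sum_t C(\hat p_i^t) + \sum_i \sum_t C(-\hat r_i^t)$. Exactly as in Lemma~\ref{lem:vversusu}, no-overbidding gives $\sum_i \sum_t C(\hat p_i^t) \le \sum_i \vecv_i \cdot \vecx_i(\veca)$, so the value term is absorbed and only the rebate disutility survives, leaving $\sum_i u_i^{\vecv_i}(\veca) \ge -\sum_i \sum_t C(-\hat r_i^t)$.

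Finally I would bound this surviving term from below. Each rebate satisfies $\hat r_i^t \ge 0$, so $-\hat r_i^t \le 0$, and the slope bound on the negative axis gives $C(-\hat r_i^t) \le -\alpha \hat r_i^t$, i.e. $-C(-\hat r_i^t) \ge \alpha \hat r_i^t$. Summing and invoking the accounting identity yields $\sum_i u_i^{\vecv_i}(\veca) \ge \alpha \sum_i \hat r_i(\veca) = \alpha \sum_i \hat p_i(\veca)$, which rearranges to the stated inequality $\sum_i \hat p_i(\veca) \le \alpha^{-1} \sum_i u_i^{\vecv_i}(\veca)$.

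I do not anticipate a genuine obstacle, since every ingredient is already available from Lemma~\ref{lem:vversusu} and the slope lemma, and the argument is in fact a simpler variant of the former. The two points demanding care are the round-by-round decomposition of $C(p_i^t)$, which hinges on strong IR guaranteeing a clean separation between the single payer and the rebate-recipients in each round, and keeping the slope inequality correctly oriented on the negative axis, namely $C(x) \le \alpha x$ for $x \le 0$, since a sign slip there would reverse the final bound.
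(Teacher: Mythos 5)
Your proposal is correct and follows essentially the same argument as the paper: the redistribution identity $\sum_i \hat p_i = \sum_i \hat r_i$, the decomposition $\sum_t C(p_i^t) = \sum_t C(\hat p_i^t) + \sum_t C(-\hat r_i^t)$, the no-overbidding bound $\sum_t C(\hat p_i^t)\le \vecv_i\cdot\vecx_i(\veca)$, and the slope bound $-C(-x)\ge\alpha x$ are exactly the ingredients the paper uses, merely arranged in the opposite direction (you bound welfare from below; the paper bounds payments from above).
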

\begin{proof}
	By no-overbidding, $\sum_t C(\hat p_i^t)\leq \vecv_i\cdot\vecx_i(\veca)$. We may therefore write:
	\begin{align*}
		\sum_i \hat p_i(\veca)&=\sum_i \hat r_i(\veca)\\
		&\leq \alpha^{-1}\sum_i\sum_t-C(-\hat r_i^t(\veca))\\
		&\leq \alpha^{-1}\Big(\sum_i \vecv_i\cdot \vecx_i(\veca)\\
		&~~~~-\sum_i\sum_tC(\hat p_i^t(\veca))-\sum_i\sum_tC(-\hat r_i^t(\veca))\Big)\\
		&=\alpha^{-1}\sum_i u_i^{\vecv_i}(\veca),
	\end{align*}
	where the first inequality follows from upper bounding $C$, and the second from no-overbidding.
\end{proof}

\begin{theorem}\label{thm:smoothroscaconvex}
	Let $M$ be a strongly IR, single-item auction that is $(\lambda,\mu_1,\mu_2)$-smooth, with $\lambda\leq1$. With quasilinear participants, every no-overbidding Nash equilibrium of the corresponding auction rosca with rebates has PoA at most $(1+\mu_1\alpha^{-1}+\mu_2\beta\alpha^{-1})/\lambda$.

\end{theorem}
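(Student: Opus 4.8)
The plan is to run the two-step template used for Theorem~\ref{thm:smoothrosca}, adapted to nonlinear cost. The essential new difficulty is that under a strictly convex $C$ the equilibrium welfare $\sum_i u_i^{\vecv_i}(\veca)$ no longer equals the total allocated value $\sum_i\vecv_i\cdot\vecx_i(\veca)$: redistributed payments are no longer welfare-neutral, so the transfers themselves leak welfare through the convexity of $C$. In the quasilinear proof a single identity $\textsc{Wel}=\sum_i\vecv_i\cdot\vecx_i$ closed the argument; here I will instead control allocated value and gross payments against welfare \emph{separately}, using Lemma~\ref{lem:vversusu} and Lemma~\ref{lem:pversusu}, which are precisely the tools built for this purpose.

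First I would invoke Lemma~\ref{lem:roundrobin}, which the text notes holds verbatim under the utility-based smoothness definition, to lift smoothness of the single-item $M$ to smoothness of the rosca. Combining this with the Nash condition $u_i^{\vecv_i}(\veca)\geq u_i^{\vecv_i}(a_i^*,\veca_{-i})$ and summing over participants yields an inequality of the shape
$$\sum\nolimits_i u_i^{\vecv_i}(\veca)\geq \lambda\,\textsc{OPT}(\vecv)-(1+\mu_1)\sum\nolimits_i\hat p_i(\veca)-\mu_2\sum\nolimits_i C(B_i(\veca)),$$
where the extra unit on the payment coefficient is the round-robin surcharge forced by the cardinality constraint. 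I would then dispatch the bid term immediately: no-overbidding (Definition~\ref{def:nobconvex}) gives $\sum_i C(B_i(\veca))\leq\sum_i\vecv_i\cdot\vecx_i(\veca)$, and Lemma~\ref{lem:vversusu} converts the right side to $\tfrac{\beta}{\alpha}\sum_i u_i^{\vecv_i}(\veca)$, producing the $\mu_2\beta\alpha^{-1}$ summand of the target bound.

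The heart of the argument is the payment term. Lemma~\ref{lem:pversusu} bounds total gross payments by $\alpha^{-1}\sum_i u_i^{\vecv_i}(\veca)$, which applied naively would turn the coefficient $(1+\mu_1)$ into $(1+\mu_1)\alpha^{-1}$ and yield only the looser bound $(1+(1+\mu_1)\alpha^{-1}+\mu_2\beta\alpha^{-1})/\lambda$. To recover the claimed coefficient $\mu_1\alpha^{-1}$, I expect to have to open up the round-robin composition rather than use it as a black box: the surcharge is contributed exactly by participants who, under their smoothness deviation, win \emph{earlier} than their optimal round, and for these participants the deviation coincides with equilibrium play. Their contribution should therefore be folded back into the equilibrium welfare on the left—together with the rebates they and the remaining deviators still collect from redistributed payments—rather than charged afresh as gross payment, so that only the genuine smoothness payment coefficient $\mu_1$ survives the conversion through Lemma~\ref{lem:pversusu}. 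Carrying out this bookkeeping, i.e.\ cleanly separating the transferred, nearly welfare-neutral portion of gross payments from the portion the composition must charge while tracking each deviator's rebates, is the step I expect to be most delicate and is where the nonlinear analysis departs from a routine specialization of Theorem~\ref{thm:smoothrosca}. Once it is in place, rearranging $\sum_i u_i^{\vecv_i}(\veca)\geq\lambda\textsc{OPT}(\vecv)-\mu_1\alpha^{-1}\sum_i u_i^{\vecv_i}(\veca)-\mu_2\beta\alpha^{-1}\sum_i u_i^{\vecv_i}(\veca)$ gives the stated PoA of $(1+\mu_1\alpha^{-1}+\mu_2\beta\alpha^{-1})/\lambda$.
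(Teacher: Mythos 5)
Your anticipated ``delicate step'' is chasing a phantom: the chain you derive \emph{before} that step, ending in the bound $(1+(1+\mu_1)\alpha^{-1}+\mu_2\beta\alpha^{-1})/\lambda$ with single-item parameters, is in substance the paper's entire proof. The paper applies the generalized smoothness inequality directly at the rosca level,
\begin{align*}
\lambda\textsc{OPT}(\vecv)&\leq\sum\nolimits_i u_i^{\vecv_i}(a_i^*(a_i,\vecv),\veca_{-i})+\mu_1\sum\nolimits_i\hat p_i(\veca)+\mu_2\sum\nolimits_i C(B_i(\veca)),
\end{align*}
then bounds the payment term via Lemma~\ref{lem:pversusu} (giving the coefficient $\mu_1\alpha^{-1}$), the bid term via no-overbidding (Definition~\ref{def:nobconvex}) plus Lemma~\ref{lem:vversusu} (giving $\mu_2\beta\alpha^{-1}$), and closes with best response --- exactly your steps, with no surcharge-shaving anywhere. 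The resolution is that the $(\lambda,\mu_1,\mu_2)$ in the theorem's bound are the \emph{composed rosca's} smoothness parameters, not the single-item auction's, despite the statement's wording (inherited from Theorem~\ref{thm:smoothrosca}, whose explicit ``$2$'' did absorb the round-robin surcharge). The corollaries confirm this: the second-price bound $1+2\beta/\alpha$ comes from plugging the rosca-level parameters $(1,1,1)$ of Corollary~\ref{cor:smoothconvex} into $1+\mu_1\alpha^{-1}+\mu_2\beta\alpha^{-1}$ and rescaling with $\gamma=1/\beta$; with the single-item parameters $(1,0,1)$ the formula would give only $1+\beta/\alpha$, contradicting the stated corollary. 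So your ``looser'' bound with single-item $\mu_1$ is numerically identical to the theorem's bound with $\mu_1+1$ in the $\mu_1$ slot --- they are the same result under two bookkeeping conventions.

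It is also worth recording that the folding trick you sketch could not recover the tighter constant even if the single-item reading were intended. For a participant $i\in S_{\alg}$ the deviation path coincides with equilibrium play, so $u_i^{\vecv_i}(a_i^*,\veca_{-i})=u_i^{\vecv_i}(\veca)$, and to extract the term $v_i^{t_i^*}$ in the smoothness inequality you must still charge at least $v_i^{\hat t_i}-u_i^{\vecv_i}(\veca)=\sum_t C(p_i^t(\veca))$, the net disutility of their equilibrium payments; the rebates these participants collect are already counted once inside $u_i^{\vecv_i}(\veca)$ and cannot be credited again. Converting any such residual charge into equilibrium welfare passes through a factor of order $\alpha^{-1}$ or $\beta\alpha^{-1}$, and after the scale-normalization $\gamma=1/\beta$ all these alternatives coincide at $\beta/\alpha$, so the $+1$ on $\mu_1$ is genuine slack-free structure of round-robin composition, not slack to be removed. (Two cosmetic slips in the statement itself, which your reading inherited: ``quasilinear participants'' should refer to costs $C$ with $C'\in[\alpha,\beta]$, and ``single-item auction that is $(\lambda,\mu_1,\mu_2)$-smooth'' should refer to the rosca's smoothness parameters.) Under the coherent reading, your proof is complete as written through the ``naive'' bound, and matches the paper's.
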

\begin{proof}
	The theorem follows from the inequalities below, justified after their statement:
	\begin{align*}
		\lambda\textsc{OPT}&\leq\sum \nolimits_iu_i^{\vecv_i}(a_i^*(a_i,\vecv),\veca_{-i}) + \mu_1\sum \nolimits_i \hat p_i(\veca)\\
		&~~~~~~~~~+\mu_2\sum \nolimits_i \hat C(B_i(\veca))\\
		&\leq\sum \nolimits_iu_i^{\vecv_i}(a_i^*(a_i,\vecv),\veca_{-i}) + \tfrac{\mu_1}{\alpha}\sum \nolimits_i u_i^{\vecv_i}(\veca)\\
		&~~~~~~~~~+\mu_2\sum \nolimits_i \hat C(B_i(\veca))\\
		&\leq\sum \nolimits_iu_i^{\vecv_i}(a_i^*(a_i,\vecv),\veca_{-i}) + \tfrac{\mu_1}{\alpha}\sum \nolimits_i u_i^{\vecv_i}(\veca)\\
		&~~~~~~~~~+\mu_2\sum \nolimits_i \vecv_i\cdot\vecx_i(\veca)\\
		&\leq\sum \nolimits_iu_i^{\vecv_i}(a_i^*(a_i,\vecv),\veca_{-i}) + \tfrac{\mu_1}{\alpha}\sum \nolimits_i u_i^{\vecv_i}(\veca)\\
		&~~~~~~~~~+\tfrac{\mu_2\beta}{\alpha}\sum \nolimits_i u_i^{\vecv_i}(\veca).
	\end{align*}
The first inequality follows from smoothness. The second follows from Lemma~\ref{lem:pversusu}, and the fourth from Lemma~\ref{lem:vversusu}. The third inequality follows from no-overbidding. Applying best response yields the result.
\end{proof}
We obtain price of anarchy bounds for second- and first-price rocscas as corollaries. In particular, note that the price of anarchy is scale-invariant with respect to $C$. That is, for any $\gamma>0$ and any equilibrium $\veca$ with values $\vecv$, the same action profile $\veca$ is also an equilibrium for values $\hat \vecv=\gamma \vecv$ and cost function $\hat C(p) = \gamma C(p)$. Hence, any bound proved for a scaling $\gamma\beta,\gamma\alpha$ of $\beta,\alpha$ also holds for $\beta,\alpha$. Taking $\gamma=1/\beta$ implies the promised linear dependencies:

\begin{corollary}\label{thm:spaconvex}
	Any Nash equilibrium of the second-price rosca satisfying no-overbidding has PoA at most $1+2\beta/\alpha$.
\end{corollary}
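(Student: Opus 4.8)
The plan is to instantiate the general machinery of Theorem~\ref{thm:smoothroscaconvex} for the specific case of the second-price rosca and then optimize the resulting constant via the scale-invariance trick. First I would recall from Corollary~\ref{cor:smoothconvex} that the second-price rosca is $(1,1,1)$-smooth, so that $\lambda=1$, $\mu_1=1$, and $\mu_2=1$. Plugging these values directly into the PoA bound $(1+\mu_1\alpha^{-1}+\mu_2\beta\alpha^{-1})/\lambda$ from Theorem~\ref{thm:smoothroscaconvex} yields $1+\alpha^{-1}+\beta\alpha^{-1}$. This is already a finite bound, but it depends on both $\alpha$ and $\beta$ in a way that does not obviously exhibit the promised clean linear dependence on the ratio $\beta/\alpha$.

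The key step is then to invoke the scale-invariance property of the price of anarchy with respect to $C$, observed just before the corollary. Since any equilibrium $\veca$ for values $\vecv$ and cost $C$ with slope bounds $[\alpha,\beta]$ is also an equilibrium for scaled values $\hat\vecv=\gamma\vecv$ and scaled cost $\hat C=\gamma C$ with slope bounds $[\gamma\alpha,\gamma\beta]$, and since the PoA ratio is unchanged under this scaling, I can apply the raw bound to the scaled parameters and transfer it back. Concretely, I would choose $\gamma=1/\beta$, so that the effective upper slope bound becomes $\gamma\beta=1$ and the effective lower bound becomes $\gamma\alpha=\alpha/\beta$. Substituting these into $1+(\gamma\alpha)^{-1}+(\gamma\beta)(\gamma\alpha)^{-1}$ gives $1+\beta/\alpha+\beta/\alpha=1+2\beta/\alpha$, which is exactly the claimed bound.

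The main obstacle, such as it is, is conceptual rather than computational: one must verify carefully that the scale-invariance argument is legitimate, i.e.\ that scaling both values and the cost function by the same factor $\gamma$ genuinely preserves the set of Nash equilibria and leaves the ratio $\textsc{OPT}(\vecv)/\textsc{Wel}^\vecv(\vecx,\vecp)$ invariant. This holds because each participant's utility $u_i^{\vecv_i}(\xvec_i,\pvec_i)=\xvec_i\cdot\vecv_i-\sum_t C(p_i^t)$ scales uniformly by $\gamma$ under the simultaneous substitution $\vecv\mapsto\gamma\vecv$, $C\mapsto\gamma C$, so best responses are unaffected and both the optimal and equilibrium welfare scale by the same factor $\gamma$, cancelling in the ratio. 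Once this invariance is granted, the corollary is an immediate specialization of Theorem~\ref{thm:smoothroscaconvex} followed by the single substitution $\gamma=1/\beta$, with no further analysis required. I would note in passing that the reduction works precisely because the theorem's bound is a rational expression in $\alpha,\beta$ that becomes homogeneous after the scaling, isolating the dependence on the ratio $\beta/\alpha$.
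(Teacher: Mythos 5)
Your proposal is correct and follows essentially the same route as the paper: instantiate Theorem~\ref{thm:smoothroscaconvex} with the $(1,1,1)$-smoothness of the second-price rosca from Corollary~\ref{cor:smoothconvex} to get the raw bound $1+\alpha^{-1}+\beta\alpha^{-1}$, then apply the scale-invariance of the PoA under $\vecv\mapsto\gamma\vecv$, $C\mapsto\gamma C$ with $\gamma=1/\beta$ to obtain $1+2\beta/\alpha$. Your explicit check that this scaling preserves equilibria (and, implicitly, the no-overbidding condition) is exactly the justification the paper relies on.
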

\begin{corollary}\label{thm:fpaconvex}
	Any Nash equilibrium of the first-price rosca satisfying no-overbidding has PoA at most $(1+2\beta/\alpha)(1-1/e)$.
\end{corollary}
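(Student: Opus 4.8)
The plan is to obtain this bound as a corollary of the master price-of-anarchy bound (Theorem~\ref{thm:smoothroscaconvex}) together with the first-price smoothness result (Corollary~\ref{cor:smoothconvex}), finished by the scale-invariance observation recorded just after Theorem~\ref{thm:smoothroscaconvex}. Concretely, I would (i) record the smoothness parameters of the first-price rosca, (ii) substitute them into the master bound, and (iii) rescale the cost function $C$ so as to convert the multiplicative dependence on $\beta$ into the promised linear dependence on $\beta/\alpha$.

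For step (i), I would begin from the single-item first-price auction, which is $((1-1/e^\beta)\beta^{-1},1,0)$-smooth: the highest-value bidder $h$ deviates to a randomized bid with density $f(x)=1/(v_h-\beta x)$ on $[0,(1-1/e^\beta)v_h/\beta]$ while every other bidder bids $0$, and the inequality $C(x)\le\beta x$ turns the deviation utility into $(1-1/e^\beta)v_h/\beta-\max_i a_i$. Lifting this to the rosca through round-robin composition (Lemma~\ref{lem:roundrobin}) increases $\mu_1$ by one, so the first-price rosca is $((1-1/e^\beta)\beta^{-1},2,0)$-smooth; this is precisely Corollary~\ref{cor:smoothconvex}. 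Invoking Lemma~\ref{lem:roundrobin} is legitimate because the single-item first-price auction is strongly IR and values are nonincreasing in time.

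For steps (ii)--(iii), substituting $\lambda=(1-1/e^\beta)\beta^{-1}$, $\mu_1=2$, and $\mu_2=0$ into Theorem~\ref{thm:smoothroscaconvex} gives a price of anarchy at most $(1+2\alpha^{-1})\beta/(1-1/e^\beta)$. I would then apply scale-invariance with $\gamma=1/\beta$: replacing $(\vecv,C)$ by $(\gamma\vecv,\gamma C)$ preserves both the equilibrium and the no-overbidding condition while rescaling the slope bounds from $[\alpha,\beta]$ to $[\alpha/\beta,1]$. Re-reading the previous estimate with $\beta\mapsto 1$ and $\alpha\mapsto\alpha/\beta$ yields the bound $(1+2\beta/\alpha)/(1-1/e)$, in which the $(1-1/e)$ factor is exactly the first-price smoothness coefficient $\lambda$ evaluated at $\beta=1$; this is the promised first-price guarantee, with its linear dependence on $\beta/\alpha$ and its $(1-1/e)$ dependence on the first-price smoothness parameter.

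The main obstacle is not the substitution, which is routine, but justifying the two structural inputs on which it rests. First, the single-item first-price smoothness bound must be verified under the nonlinear cost $C$: one must check that the randomized deviation above integrates to the stated $(1-1/e^\beta)\beta^{-1}$ coefficient using only $C(x)\le\beta x$, and that every other bidder can secure nonnegative utility by bidding $0$. Second, the scale-invariance step must be argued with care: one has to confirm that both ``Nash equilibrium'' and ``no-overbidding'' (Definition~\ref{def:nobconvex}, i.e.\ $C(B_i(\veca))\le\vecv_i\cdot\vecx_i(\veca)$) are invariant under the simultaneous rescaling $\hat\vecv=\gamma\vecv$, $\hat C=\gamma C$, since only then does a bound proved for the slope pair $(\gamma\alpha,\gamma\beta)$ transfer back to $(\alpha,\beta)$.
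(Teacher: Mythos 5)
Your proposal is correct and is exactly the paper's intended (unstated) derivation: Corollary~\ref{cor:smoothconvex} supplies the composed smoothness parameters $\bigl((1-1/e^\beta)\beta^{-1},2,0\bigr)$, Theorem~\ref{thm:smoothroscaconvex} then gives $(1+2\alpha^{-1})\beta/(1-1/e^\beta)$, and the scale-invariance substitution $\gamma=1/\beta$ converts this to $(1+2\beta/\alpha)/(1-1/e)$, with your checks of the single-item first-price smoothness under $C(x)\le\beta x$ and of the invariance of both the equilibrium and Definition~\ref{def:nobconvex} under $(\vecv,C)\mapsto(\gamma\vecv,\gamma C)$ being precisely the right supporting steps. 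Note that your form of the bound, with $e/(e-1)$ \emph{dividing} rather than multiplying, is the correct one: the corollary as printed, $(1+2\beta/\alpha)(1-1/e)$, is a typo, as one sees by setting $\alpha=\beta$, which must recover the quasilinear no-overbidding guarantee of $3e/(e-1)$ rather than $3(1-1/e)$.
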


Note that our first-price result requires the no-overbidding assumption. We leave open whether the less restrictive analysis of Section~\ref{sec:overbidding} can be extended in some way to the nonlinear setting as well.

\section{Other Extensions for Auction Results}

The smoothness framework is known to be robust to variations in equilibrium assumptions, and we inherit this robustness to a significant degree. 
First, in roscas and especially sequential ones, it is reasonable to assume that participants may not best respond perfectly. A more natural notion may be some form of $\epsilon$-best response, where participants maximize their utility up to an additive $\epsilon$ error. All our results hold under this generalization, with the welfare guarantees similarly degrading by an additive, $O(n\epsilon)$ factor.
Second, we may also want to study roscas under incomplete information, with each participant's values being drawn according to a prior.
Smoothness-based welfare results typically extend to such settings, and ours do as well in large part. In particular, Theorems~\ref{thm:upfront} and \ref{thm:smoothrosca} both hold under any Bayes-Nash equilibrium where participants' value vectors are drawn independently of one another. For Theorem~\ref{thm:smoothrosca}, this can be derived directly by mimicking the proof of Theorem 4.3 in \citet{syrgkanis2013composable}. The extension to Theorem~\ref{thm:upfront} holds by additionally observing that in any Bayes-Nash equilibrium of up-front rosca, no participant has an incentive to overbid.

Unfortunately, our proof of Lemma~\ref{lem:overpay} seems to rely on the full-information assumption. We leave it as an open question whether a Lemma~\ref{lem:overpay} can be extended beyond full-information environments. Either way, among all auction settings, roscas, where participants are typically members of tight-knit communities, are maybe the best candidate for assuming full information.

\section{Experiment: CRRA Utilities (Extended)}
\label{app:expcrrasupp}

\begin{table}[!ht]
\centering
\caption{Value Profiles for 9-participant Roscas Used in CRRA Utilities Experiment}
\label{table:crraprofiles9}
\begin{tabular}{c} 
\textbf{Profiles (Values in Rounds 1-9)} \\ \hline
2 0 0 0 0 0 0 0 0 \\ \hline
2 2 2 2 2 0 0 0 0 \\ \hline
5 5 0 0 0 0 0 0 0 \\ \hline
5 5 5 5 5 5 0 0 0 \\ \hline
8 8 8 0 0 0 0 0 0 \\ \hline
8 8 8 8 8 8 8 0 0 \\ \hline
8 8 8 5 5 5 2 2 2 \\ \hline
8 8 6 6 4 4 2 2 0 \\ \hline
8 7 6 5 4 3 2 1 0 \\
\end{tabular}
\end{table}

The value profiles for in our CRRA Utilities experiment in Section~\ref{sec:expcrra} are provided in Table~\ref{table:crraprofiles9}.  Recall that these nine value profiles feature six with {\em cutoff values} of the form $v_i^t = \overline v$ for all $i\leq \hat t$ for some $\hat t$ and three participants with values that are roughly linearly decreasing in time.  The average maximum value among cutoff participants is $5$, which matched the average value for linearly decreasing values.

We repeat this experiment using a 30-participant rosca with qualitatively similar value profiles as those used in the 9-participant setting: the 30 value profiles feature 20 with {\em cutoff values} of the form $v_i^t = \overline v$ for all $i\leq \hat t$ for some $\hat t$ and 10 participants with values that are roughly linearly decreasing in time.  The average maximum value among cutoff participants is $15.5$, which is very close to the average value for linearly decreasing values ($15$).  Table~\ref{table:crra30} presents the results of our CRRA Utilities experiment from Section~\ref{sec:expcrra} repeated in a 30-participant rosca.  Once again, we observe the same trends as in the 9-participant rosca experiment, providing further evidence to strengthen the observations and claims in Section~\ref{sec:expcrra}.

\begin{table}[!ht]
\centering
\caption{Swap Rosca Performance Under Different CRRA Parameters (OPT = 551, random baseline ratio = 1.580)}
\label{table:crra30}
\begin{tabular}{cc|c|c|c|c|c} 
 & & \multicolumn{5}{c}{$\mathbf{W}$} \\
 & & $\mathbf{1}$ & $\mathbf{2}$ & $\mathbf{3}$& $\mathbf{4}$& $\mathbf{5}$ \\ \cline{2-7}
 \multirow{9}{*}{$\mathbf{a}$} & $\mathbf{0}$ & 1.036 & 1.036 & 1.036 & 1.036 & 1.036 \\ \cline{2-7}
 & $\mathbf{0.1}$ & 1.105 & 1.104 & 1.087 & 1.086 & 1.085 \\ \cline{2-7}
 & $\mathbf{0.2}$ & 1.105 & 1.104 & 1.088 & 1.088 & 1.086 \\ \cline{2-7}
 & $\mathbf{0.3}$ & 1.104 & 1.104 & 1.105 & 1.090 & 1.089 \\ \cline{2-7}
 & $\mathbf{0.5}$ & 1.105 & 1.104 & 1.104 & 1.104 & 1.104 \\ \cline{2-7}
 & $\mathbf{0.75}$ & 1.104 & 1.104 & 1.105 & 1.104 & 1.104 \\ \cline{2-7}
 & $\mathbf{1}$ & 1.104 & 1.105 & 1.105 & 1.104 & 1.105 \\ \cline{2-7}
 & $\mathbf{1.5}$ & 1.104 & 1.104 & 1.105 & 1.105 & 1.105 \\ \cline{2-7}
 & $\mathbf{2}$ & 1.104 & 1.104 & 1.105 & 1.104 & 1.104 \\
\end{tabular}
\end{table}

\section{Experiment: Distributional Diversity}
\label{app:experimentDist}

Tables~\ref{table:needs9}-~\ref{table:needs30} summarize our second set of experiments, which vary the composition of the population of participants in both $9$-participant and 30-participant roscas. More specifically, we consider seven different configurations of value profiles, described briefly below and provided for the 9-participant rosca in Table~\ref{table:distprofiles9} (qualitatively similar profiles were again used for the 30-participant rosca, as in our CRRA Utilities experiments). All participants have cutoff values, with participant $i$'s cutoff at $t=i$. We then vary (a) the distribution of magnitudes for participants' values and (b) the correlation of an participant's value with their cutoff round. Distributions labeled ``-dec'' have values which are negatively correlated with the cutoff round, and ``-inc'' have positively correlated values. The instance ``pointmass'' gives all participants constant value up to their cutoffs. The remaining instances can be described by the distribution of participants' constant values before their cutoffs: ``unif'' has a distribution uniform on 
$\{1,\ldots, 9\}$, ``pareto'' a pareto distribution, and ``unim'' a unimodal distribution with its mode at $4$. We further consider both quasilinear and CRRA utilities.

\begin{table}[!ht]
\centering
\caption{Value Profiles for 9-participant Roscas Used in the Distributional Diversity Experiment}
\label{table:distprofiles9}

\begin{subtable}{0.15\textwidth}
\centering
\begin{tabular}{c} 
0 0 0 0 0 0 0 0 0 \\ \hline
4 0 0 0 0 0 0 0 0 \\ \hline
4 4 0 0 0 0 0 0 0 \\ \hline
4 4 4 0 0 0 0 0 0 \\ \hline
4 4 4 4 0 0 0 0 0 \\ \hline
4 4 4 4 4 0 0 0 0 \\ \hline
4 4 4 4 4 4 0 0 0 \\ \hline
4 4 4 4 4 4 4 0 0 \\ \hline
4 4 4 4 4 4 4 4 0 \\
\end{tabular}
\caption{\textbf{pointmass}}
\end{subtable}
\hfill
\begin{subtable}{0.15\textwidth}
\centering
\begin{tabular}{c} 
9 0 0 0 0 0 0 0 0 \\ \hline
8 8 0 0 0 0 0 0 0 \\ \hline
7 7 7 0 0 0 0 0 0 \\ \hline
6 6 6 6 0 0 0 0 0 \\ \hline
5 5 5 5 5 0 0 0 0 \\ \hline
4 4 4 4 4 4 0 0 0 \\ \hline
3 3 3 3 3 3 3 0 0 \\ \hline
2 2 2 2 2 2 2 2 0 \\ \hline
1 1 1 1 1 1 1 1 1 \\
\end{tabular}
\caption{\textbf{unif-dec}}
\end{subtable}
\hfill
\begin{subtable}{0.15\textwidth}
\centering
\begin{tabular}{c} 
0 0 0 0 0 0 0 0 0 \\ \hline
1 1 0 0 0 0 0 0 0 \\ \hline
2 2 2 0 0 0 0 0 0 \\ \hline
3 3 3 3 0 0 0 0 0 \\ \hline
4 4 4 4 4 0 0 0 0 \\ \hline
5 5 5 5 5 5 0 0 0 \\ \hline
6 6 6 6 6 6 6 0 0 \\ \hline
7 7 7 7 7 7 7 7 0 \\ \hline
8 8 8 8 8 8 8 8 8 \\
\end{tabular}
\caption{\textbf{unif-inc}}
\end{subtable}

\begin{subtable}{0.5\textwidth}
\centering
\begin{tabular}{ccccccccc} 
12.7 & 0 & 0 & 0 & 0 & 0 & 0 & 0 & 0 \\ \hline
6.4 & 6.4 & 0 & 0 & 0 & 0 & 0 & 0 & 0 \\ \hline
4.2 & 4.2 & 4.2 & 0 & 0 & 0 & 0 & 0 & 0 \\ \hline
3.2 & 3.2 & 3.2 & 3.2 & 0 & 0 & 0 & 0 & 0 \\ \hline
2.5 & 2.5 & 2.5 & 2.5 & 2.5 & 0 & 0 & 0 & 0 \\ \hline
2.1 & 2.1 & 2.1 & 2.1 & 2.1 & 2.1 & 0 & 0 & 0 \\ \hline
1.8 & 1.8 & 1.8 & 1.8 & 1.8 & 1.8 & 1.8 & 0 & 0 \\ \hline
1.6 & 1.6 & 1.6 & 1.6 & 1.6 & 1.6 & 1.6 & 1.6 & 0 \\ \hline
1.4 & 1.4 & 1.4 & 1.4 & 1.4 & 1.4 & 1.4 & 1.4 & 1.4 \\
\end{tabular}
\caption{\textbf{pareto-dec}}
\end{subtable}

\begin{subtable}{0.5\textwidth}
\centering
\begin{tabular}{ccccccccc}
0 & 0 & 0 & 0 & 0 & 0 & 0 & 0 & 0 \\ \hline
1.6 & 0 & 0 & 0 & 0 & 0 & 0 & 0 & 0 \\ \hline
1.8 & 1.8 & 0 & 0 & 0 & 0 & 0 & 0 & 0 \\ \hline
2.1 & 2.1 & 2.1 & 0 & 0 & 0 & 0 & 0 & 0 \\ \hline
2.5 & 2.5 & 2.5 & 2.5 & 0 & 0 & 0 & 0 & 0 \\ \hline
3.2 & 3.2 & 3.2 & 3.2 & 3.2 & 0 & 0 & 0 & 0 \\ \hline
4.2 & 4.2 & 4.2 & 4.2 & 4.2 & 4.2 & 0 & 0 & 0 \\ \hline
6.4 & 6.4 & 6.4 & 6.4 & 6.4 & 6.4 & 6.4 & 0 & 0 \\ \hline
12.7 & 12.7 & 12.7 & 12.7 & 12.7 & 12.7 & 12.7 & 12.7 & 0 \\
\end{tabular}
\caption{\textbf{pareto-inc}}
\end{subtable}

\begin{subtable}{0.5\textwidth}
\centering
\begin{tabular}{ccccccccc} 
7.2 & 0 & 0 & 0 & 0 & 0 & 0 & 0 & 0 \\ \hline
5.6 & 5.6 & 0 & 0 & 0 & 0 & 0 & 0 & 0 \\ \hline
5.6 & 5.6 & 5.6 & 0 & 0 & 0 & 0 & 0 & 0 \\ \hline
4.0 & 4.0 & 4.0 & 4.0 & 0 & 0 & 0 & 0 & 0 \\ \hline
4.0 & 4.0 & 4.0 & 4.0 & 4.0 & 0 & 0 & 0 & 0 \\ \hline
4.0 & 4.0 & 4.0 & 4.0 & 4.0 & 4.0 & 0 & 0 & 0 \\ \hline
2.4 & 2.4 & 2.4 & 2.4 & 2.4 & 2.4 & 2.4 & 0 & 0 \\ \hline
2.4 & 2.4 & 2.4 & 2.4 & 2.4 & 2.4 & 2.4 & 2.4 & 0 \\ \hline
0.8 & 0.8 & 0.8 & 0.8 & 0.8 & 0.8 & 0.8 & 0.8 & 0.8 \\
\end{tabular}
\caption{\textbf{unim-dec}}
\end{subtable}

\begin{subtable}{0.5\textwidth}
\centering
\begin{tabular}{ccccccccc}
0 & 0 & 0 & 0 & 0 & 0 & 0 & 0 & 0 \\ \hline
2.4 & 0 & 0 & 0 & 0 & 0 & 0 & 0 & 0 \\ \hline
2.4 & 2.4 & 0 & 0 & 0 & 0 & 0 & 0 & 0 \\ \hline
4.0 & 4.0 & 4.0 & 0 & 0 & 0 & 0 & 0 & 0 \\ \hline
4.0 & 4.0 & 4.0 & 4.0 & 0 & 0 & 0 & 0 & 0 \\ \hline
4.0 & 4.0 & 4.0 & 4.0 & 4.0 & 0 & 0 & 0 & 0 \\ \hline
5.6 & 5.6 & 5.6 & 5.6 & 5.6 & 5.6 & 0 & 0 & 0 \\ \hline
5.6 & 5.6 & 5.6 & 5.6 & 5.6 & 5.6 & 5.6 & 0 & 0 \\ \hline
7.2 & 7.2 & 7.2 & 7.2 & 7.2 & 7.2 & 7.2 & 7.2 & 0 \\

\end{tabular}
\caption{\textbf{unim-inc}}
\end{subtable}
\end{table}

We first note that as in the previous experiment, swap roscas provided across-the-board improvements over both the worst-case ratios and those of random allocation, and that under CRRA utilities, the social welfare degraded slightly compared to quasilinear. Beyond these observations, we can glean further insights from pairwise comparisons. Distributions labeled ``-dec'' represent settings where higher-valued participants have more urgent need for allocation. Both random allocation and swap roscas (under CRRA utilities) perform poorly on these instances compared to their ``-inc'' counterparts, whereas swap roscas (under quasilinear utilities) achieved better (lower) performance ratios.  Swap rosca welfares were roughly double those of the random allocations (i.e., swap roscas achieved performance ratios that were half those of random allocations).  A second informative set of comparisons is between ``pointmass'' and the ``unim'' distrbutions. Under both quasilinear and CRRA utilities, the approximation ratio seems to be driven by the correlation between urgency and value much more than the level of inequality in needs: if the latter was the main concern, both  ``unim'' distributions would see worse performance. Swap roscas seem well-equipped to coordinate allocation among heterogeneous participants.

\begin{table}
\centering
\caption{Swap Rosca Performance Under Diverse Value Distributions (9 participants). CRRA parameter values $W=4$, $a=.5$}
\label{table:needs9}
\begin{tabular}{c|c|c|c|c} 
 \textbf{Profile} & \textbf{OPT} & \textbf{Random} & \textbf{Quasilinear} & \textbf{CRRA} \\ \cline{1-5}
\textbf{pointmass} & 32 & 2.002 & 1.220 & 1.220 \\ \cline{1-5}
\textbf{unif-dec} & 45 & 2.452 & 1.000 & 1.327 \\ \cline{1-5}
\textbf{unif-inc} & 36 & 1.351 & 1.030 & 1.031 \\ \cline{1-5}
\textbf{pareto-dec} & 35.994 & 2.828 & 1.000 & 1.451 \\ \cline{1-5}
\textbf{pareto-inc} & 34.580 & 1.485 & 1.096 & 1.106 \\ \cline{1-5}
\textbf{unim-dec} & 36 & 2.337 & 1.116 & 1.310 \\ \cline{1-5}
\textbf{unim-inc} & 35.2 & 1.708 & 1.138 & 1.141 \\
\end{tabular}
\end{table}

\begin{table}
\centering
\caption{Swap Rosca Performance Under Diverse Value Distributions (30 participants). CRRA parameter values $W=4$, $a=.5$}
\label{table:needs30}
\begin{tabular}{c|c|c|c|c} 
 \textbf{Profile} & \textbf{OPT} & \textbf{Random} & \textbf{Quasilinear} & \textbf{CRRA} \\ \cline{1-5}
\textbf{pointmass} & 116 & 2.002 & 1.218 & 1.218 \\ \cline{1-5}
\textbf{unif-dec} & 465 & 2.819 & 1.000 & 1.385 \\ \cline{1-5}
\textbf{unif-inc} & 435 & 1.452 & 1.043 & 1.058 \\ \cline{1-5}
\textbf{pareto-dec} & 399.499 & 4.000 & 1.000 & 1.757 \\ \cline{1-5}
\textbf{pareto-inc} & 396.165 & 1.323 & 1.066 & 1.061 \\ \cline{1-5}
\textbf{unim-dec} & 400 & 2.553 & 1.098 & 1.347 \\ \cline{1-5}
\textbf{unim-inc} & 398.667 & 1.636 & 1.114 & 1.114 \\
\end{tabular}
\end{table}

\section{Simulation Code}

All code used for simulations can be found at:
\center{
{\tt\href{https://github.com/cikeokwu/swap_rosca_sims}{github.com/cikeokwu/swap\_rosca\_sims}}
}

\end{document}